\RequirePackage{fix-cm}
\documentclass[10pt,letterpaper]{article}
\usepackage{setspace} 
\doublespacing
\usepackage{mathptmx}
\usepackage{amssymb}
\usepackage{amsthm}
\usepackage{amsmath}
\usepackage{bm}
\usepackage{graphicx}
\usepackage[table]{xcolor}
\usepackage{multirow}
\usepackage{caption}
\tolerance = 10000
\setlength{\oddsidemargin}{-5mm}
\setlength{\evensidemargin}{-5mm}
\setlength{\topmargin}{-5mm}
\setlength{\headheight}{0mm}
\setlength{\headsep}{0mm}
\setlength{\textwidth}{175mm}
\setlength{\textheight}{232mm}
\theoremstyle{plain}
\newtheorem{definition}{Definition}[section]

\newtheorem{lemma}{Lemma}[section]

\newtheorem{example}{Example}[section]

\newtheorem{remark}{Remark}[section]
\newtheorem{assumption}{Assumption}[section]
\newcommand{\fe}{\mbox{\rm fe}}
\newcommand{\efe}{\mbox{\rm efe}}
\newcommand{\dis}{\mbox{\rm d}}
\begin{document}
\pagenumbering{arabic}
%
%
\begin{center}
{\Large\bf{Quasi-Conscious Multivariate Systems}}\\
Jonathan W.\ D.\ Mason, Mathematical Institute, University of Oxford, UK {\footnotesize(Submitted to Complexity 2015)}
\end{center}
\vspace{-0.7cm}
\begin{spacing}{1}
\begin{abstract}
Conscious experience is awash with underlying relationships. Moreover, for various brain regions such as the visual cortex, the system is biased toward some states. Representing this bias using a probability distribution shows that the system can define expected quantities. The mathematical theory in the present paper links these facts by using expected float entropy (efe), which is a measure of the expected amount of information needed, to specify the state of the system, beyond what is already known about the system from relationships that appear as parameters. Under the requirement that the relationship parameters minimise efe, the brain defines relationships. It is proposed that when a brain state is interpreted in the context of these relationships the brain state acquires meaning in the form of the relational content of the associated experience. For a given set, the theory represents relationships using weighted relations which assign continuous weights, from 0 to 1, to the elements of the Cartesian product of that set. The relationship parameters include weighted relations on the nodes of the system and on their set of states. Examples obtained using Monte-Carlo methods (where relationship parameters are chosen uniformly at random) suggest that efe distributions with long left tails are most important.
\end{abstract}
\end{spacing}
\vspace{-0.7cm}
\section{Introduction}
\label{sec:1}
In the present paper we further develop the theory introduced in the article `Consciousness and the structuring property of typical data' (see \cite{Mason}), and demonstrate and investigate the theory through applications in a number of examples using computational methods.\\
It is intended that the theory will provide a way into the mathematics that underpins how the brain defines the relational content of consciousness. Indeed, conscious experience clings to a substrate of underlying relationships: points in a person's field of view can be strongly related (if close together) or unrelated (if far apart), giving geometry; colours can appear similar (e.g.\ red and orange) or completely different (e.g.\ red and green). We can make a very long list of such examples of relations involving different sounds, smells, tastes and locations of touch. Furthermore, at a higher semantic level involving several brain regions, if we see someone we know and hear a person's name then we know whether the name relates to that person. It is hard to think of any conscious experience that does not involve relations. Whilst it is difficult to explain how the brain defines the colour blue, in the present paper we hope to provide the beginnings of a mathematical theory for how the brain defines all of the relations underlying consciousness and, therefore, explain why, for example, blue appears similar to turquoise but different to red. It is proposed that when a brain state is interpreted in the context of all these relations, defined by the brain, the brain state acquires meaning in the form of the relational content of the experience. If we consider the relations defined by the brain to be a type of statistic then we have the following analogy. A single observation of a one dimensional random variable is almost meaningless, but in the context of the statistics of the random variable, such as mean and variance, the observation has meaning. For arguments in support of this approach, the reader is referred to \cite{Mason}.\\
The issue of how a system such as the brain defines relations is crucial. Importantly, for various brain regions such as the visual cortex, (under temporally well spaced observations of the system) the probability distribution over the different possible states of the system is far from being uniform owing to learning rules of which the Bienenstock, Cooper and Munro (BCM) version of Hebbian theory is one candidate; see \cite{BCM}, \cite{KRB} and \cite{DB}. Hence, the brain is not merely driven by the current sensory input, but is biased toward certain states as a result of a long history of sensory inputs. The probability distribution over the states of the system is therefore a property of the system itself allowing the system to define expected quantities.\\
In the theory presented in the present paper, the brain defines relations under the requirement that the expected quantity of a particular type of entropy is minimised. We call this entropy {\em float entropy}. For a collection of relations on the system and any given state of the system, the float entropy of the state is a measure of the amount of information required, in addition to the information given by the relations, in order to specify that state. We make the definition of float entropy precise in Subsection~\ref{subsec:1.1}. However, later in the present paper we will give a more general definition (multi-relational float entropy) which allows the involvement of more than two relations; see Subsection \ref{subsec:4.1}. We will also consider a time dependent version, and the theory of the present paper will be compared with Integrated Information Theory and Shannon entropy.

\subsection{Definitions}
\label{subsec:1.1}
In this subsection we provide the main definitions in the present paper. Systems such as the brain, and its various regions, are networks of interacting nodes. In the case of the brain we may take the nodes of the system to be the individual neurons or possibly larger structures such as cortical columns. The nodes of the system have a repertoire (range) of states that they can be in. For example, the states that neurons can be in could be associated with different firing frequencies. In the present paper we assume that the node repertoire is finite (as was assumed in \cite{Mason}), and the state of the system is the aggregate of the states of the nodes.\\
The original theory in \cite{Mason} used a mainly set theoretic approach, where a relation on a nonempty set $S$ was usually taken to be a binary relation $R\subseteq S^{2}$. Weighted relations (see below) are slightly more general than binary relations, and the further development (presented in the present paper) of the original theory uses weighted relations because they allow a system to define a weighted relation on the repertoire of its nodes. This is desirable as we will see from examples later in the paper.\\
In Definition \ref{def:S} the elements of the set $S$ are to be taken as the nodes of the system.

\begin{definition}
\label{def:S}
Let $S$ be a nonempty finite set, $n:=\#S$. Then a {\em data element} for $S$ is a set (having a unique arbitrary index label $i$)
\vspace{-0.5cm}
\begin{gather*}
S_{i}:=\{(a,f_{i}(a))\colon a\in S,\, f_{i}:S\to V\},\quad\mbox{where }f_{i}\mbox{ is a map}
\end{gather*}
and $V:=\{v_{1},v_{2},\ldots,v_{m}\}$ is the {\em node repertoire}. The set of all data elements for $S$ given $V$ is $\Omega_{S,V}$ so that $\#\Omega_{S,V}=m^{n}$. 
For temporally well spaced observations, it is assumed that a given finite system defines a random variable with probability distribution $P:\Omega_{S,V}\to [0,1]$ for some finite set $S$ and node repertoire $V$.
If $T$ is a finite set of numbered observations of the system then $T$ is called the {\em typical data} for $S$. The elements of $T$ (called {\em typical data elements}) are handled using a function
\begin{gather*}
\tau:\{1,\ldots,\#T\}\to\{i\colon S_{i}\in\Omega_{S,V}\},
\end{gather*}
where $S_{\tau(k)}$ is the value of observation number $k$ for $k\in\{1,\ldots,\#T\}$. In particular, the function $\tau$ need not be injective since small systems may be in the same state for several observations.
\end{definition}

\begin{remark}
\label{rem:P}
Note that $P$ in Definition \ref{def:S} extends to a probability measure on the power set $2^{\Omega_{S,V}}$ of $\Omega_{S,V}$ by defining
\begin{gather*}
P(A):=\sum_{S_{i}\in A}P(S_{i}),\quad\mbox{for }A\in 2^{\Omega_{S,V}}.
\end{gather*}
Hence, we have a probability space $(\Omega_{S,V},2^{\Omega_{S,V}},P)$ with sample space $\Omega_{S,V}$, sigma-algebra $2^{\Omega_{S,V}}$, and probability measure~$P$.
\end{remark}

We now need the definition of a weighted relation.
\begin{definition}[Weighted relations]
\label{def:WR}
Let $S$ be a nonempty set. A {\em weighted relation} on $S$ is a function of the form
\begin{gather*}
R:S^{2}\to[0,1],
\end{gather*}
where $[0,1]$ is the unit interval. We say that $R$ is:
\begin{enumerate}
\item[1.] {\em reflexive} if $R(a,a)=1$ for all $a\in S$;
\item[2.] {\em symmetric} if $R(a,b)=R(b,a)$ for all $a,b\in S$.
\end{enumerate}
The set of all reflexive, symmetric weighted-relations on $S$ is denoted $\Psi_{S}$.
\end{definition}

\begin{remark}
\label{rem:WR}
Except where stated, the weighted relations used in the present paper are reflexive and symmetric. Relative to such a weighted relation, the value $R(a,b)$ quantifies the strength of the relationship between $a$ and $b$, interpreted in accordance with the usual order structure on $[0,1]$ so that $R(a,b)=1$ is a maximum. For a small finite set, it is useful to display a weighted relation on that set as a weighted relation table (i.e.\ as a matrix).
\end{remark}

Before giving the definition of float entropy we require Definitions \ref{def:U} and \ref{def:d1}.
\begin{definition}
\label{def:U}
Let $S$ be as in Definition \ref{def:S} and let $U:V^{2}\to[0,1]$ be a reflexive, symmetric weighted-relation on the node repertoire $V$; i.e.\ $U\in\Psi_{V}$. Then, for each data element $S_{i}\in\Omega_{S,V}$, we define a function $R\{U,S_{i}\}:S^{2}\to[0,1]$ by setting
\begin{gather*}
R\{U,S_{i}\}(a,b):=U(f_{i}(a),f_{i}(b))\quad\mbox{for all }a,b\in S.
\end{gather*}
It is easy to see that $R\{U,S_{i}\}\in\Psi_{S}$.
\end{definition}

\begin{definition}
\label{def:d1}
Let $S$ be a nonempty finite set. Every weighted relation on $S$ can be viewed as a $\#S^{2}$-dimensional real vector. Hence, the $\dis_{n}$ metric is a metric on the set of all such weighted relations by setting
\begin{gather*}
\dis_{n}(R,R'):=\bigg(\sum_{(a,b)\in S^{2}}|R(a,b)-R'(a,b)|^{n}\bigg)^{1/n},
\end{gather*}
where $R$ and $R'$ are any two weighted relations on $S$. Similarly we have the metric $\dis_{\infty}(R,R'):=\max_{S^{2}}|R(a,b)-R'(a,b)|$.
\end{definition}

\begin{definition}[Float entropy]
\label{def:fe}
Let $S$ be as in Definition \ref{def:S}, let $U\in\Psi_{V}$, and let $R\in\Psi_{S}$. The {\em float entropy} of a data element $S_{i}\in\Omega_{S,V}$, relative to $U$ and $R$, is defined as
\begin{gather*}
\fe(R,U,S_{i}):=\log_{2}(\#\{S_{j}\in\Omega_{S,V}\colon \dis(R,R\{U,S_{j}\})\leq \dis(R,R\{U,S_{i}\})\}),
\end{gather*}
where, in the present paper (unless otherwise stated), $\dis$ is the $\dis_{1}$ metric. Furthermore, let $P:\Omega_{S,V}\to[0,1]$ and $T$ be as in Definition \ref{def:S}. The {\em expected float entropy}, relative to $U$ and $R$, is defined as
\begin{gather*}
\efe(R,U,P):=\sum_{S_{i}\in\Omega_{S,V}}P(S_{i})\fe(R,U,S_{i}).
\end{gather*}
The $\efe(R,U,T)$ approximation of $\efe(R,U,P)$ is defined as
\begin{gather*}
\efe(R,U,T):=\frac{1}{\#T}\sum_{k=1}^{\#T}\fe(R,U,S_{\tau(k)}),
\end{gather*}
where $\tau$ need not be injective by Definition \ref{def:S}. By construction, $\efe$ is measured in bits per data element (bpe).
\end{definition}

It is proposed that a system (such as the brain and its subregions) will define $U$ and $R$ (up to a certain resolution) under the requirement that the $\efe$ is minimised. Hence, for a given system (i.e.\ for a fixed $P$), we attempt to find solutions in $U$ and $R$ to the equation
\begin{gather}
\label{equ:minefe}
\efe(R,U,P)=\min_{R'\in\Psi_{S},\,U'\in\Psi_{V}}\efe(R',U',P).
\end{gather}
In practice we replace $\efe(\cdot,\cdot,P)$ in (\ref{equ:minefe}) with $\efe(\cdot,\cdot,T)$.

\begin{remark}
\label{rem:fe}
In Definition \ref{def:fe} the $\dis_{1}$ metric is used. It turns out that, amongst many metrics, a change in metric has only a small effect on the solutions to (\ref{equ:minefe}). There are also plenty of pathological metrics which, when used, will significantly change the solutions to (\ref{equ:minefe}). In Remark \ref{rem:WR} we mentioned that, for a weighted relation, the value of $R(a,b)$ is interpreted in accordance with the usual order structure on $[0,1]$. We argue that the order structure to be used on $[0,1]$ should be determined by the metric that is being used in Definition \ref{def:fe}. Hence, for a pathological metric, whilst the solutions to (\ref{equ:minefe}) will have changed, their interpretation as weighted relations may be largely unchanged when the order structure used on $[0,1]$ is determined by the metric being used (when this makes sense). In practice, we want to use the usual order structure on $[0,1]$, and this requirement limits which metrics should be used in Definition \ref{def:fe}. We will look at the issue of metrics in some detail in Subsection \ref{subsec:3.3}.
\end{remark}

\begin{remark}
\label{rem:new-vs-old}
The theory presented in the present paper uses the definitions in Subsection \ref{subsec:1.1}. Suppose we restricted these definitions so that the only weighted relation we could use on the node repertoire $V$ was the Kronecker delta, and the only elements of $\Psi_{S}$ we could use were weighted relations taking values in the two point set $\{0,1\}$. Then, under these restrictions, Definition \ref{def:fe} would yield a definition of float entropy equivalent to that given in \cite{Mason}. Indeed, note that a weighted relation $R:S^{2}\to\{0,1\}$ is given by the indicator function for the relation $\{(a,b)\in S^{2}\colon R(a,b)=1\}\subseteq S^{2}$. Hence, the theory presented in the present paper is indeed a development of the theory presented in \cite{Mason}.
\end{remark}

\begin{remark}
\label{rem:cefe}
With reference to Remark \ref{rem:P} and Definition \ref{def:fe}, for $A\in 2^{\Omega_{S,V}}$, we have the {\em weak conditional} $\efe$
\begin{gather*}
\efe(R,U,P\mid A):=\sum_{S_{i}\in\Omega_{S,V}}P(S_{i}\mid A)\fe(R,U,S_{i}).
\end{gather*}
Weak conditional $\efe$ can be useful when considering a system that has entered a particular mode such that this mode restricts the system to a particular set of data elements. There may be other useful definitions of conditional $\efe$.
\end{remark}

\subsection{Advantages of the theory and overview}
\label{subsec:1.2}
The examples in the present paper are intended to have relevance to the visual cortex and our experience of monocular vision. In lieu of typical data for the visual cortex we apply the theory to typical data for digital photographs of the world around us. If the theory, as used in the examples, is relevant to the visual cortex then the examples show that the perceived relationships between different colours, the perceived relationships between different brightnesses, and the perceived relationships between different points in a person's field of view (giving geometry) are all defined by the brain in a mutually dependent way. Hence, in this case, there is a connection between the relationships that underly colour perception and our perception of the underlying geometry of the world around us. Of course the states of the visual cortex are somewhat more complicated than digital photographs since some neurons have sophisticated receptive fields. However, the theory presented in the present paper does not assume that the nodes of the visual cortex have to be individual neurons. Instead, each node can consist of many neurons; effectively representing the data elements using a larger base (note that we can think of the node repertoire as being analogous to a choice of base in the representation of integers). Hence, the examples could well be relevant to the visual cortex. A preliminary discussion and investigation regarding base is presented in Subsection \ref{subsec:3.1}.\\
We also apply the theory to a system where the probability distribution $P$ in Definition \ref{def:S} is uniform over $\Omega_{S,V}$. In this case the solutions to (\ref{equ:minefe}) vary greatly (instead of all being similar) and, hence, the system fails to define weighted relations that give a coherent interpretation of the states of the system. The variation in the solutions to (\ref{equ:minefe}) is partly due to symmetries, and this is discussed in Example \ref{exa:Output_5.6}.\\
It is argued in \cite{Mason} that the theory presented there provides a solution to the binding problem and avoids the homunculus fallacy. Those arguments also apply to the theory presented in the present paper. In particular, consciousness is not the output of some algorithmic process but it may instead, largely, be the states of the system interpreted in the context of the weighted relations that minimise expected float entropy, where here we are talking about a definition of float entropy that involves more than the two weighted relations used in (\ref{equ:minefe}); see Subsection \ref{subsec:4.1}. This argument may become clearer for the reader after going through the examples in the present paper.
The rest of the paper is organised as follows. Section \ref{sec:2} looks at obtaining typical data from digital photographs, and specifies the computational methods used for finding solutions to (\ref{equ:minefe}). Section \ref{sec:3} provides six examples in which the theory is applied. We continue the development of the theory by looking at changing the base of a system, joining and partitioning systems, and metric independence. Section \ref{sec:4} provides generalisations of Definition \ref{def:fe}, a comparison between the present theory and both Giulio Tononi's Integrated Information Theory (IIT) and Shannon entropy, followed by the conclusion. Appendix \ref{App:AppendixA} lists the software used, and Appendix \ref{App:AppendixB} provides a list of notation.
\section{Typical data and computational methods}
\label{sec:2}
In this section we look at obtaining typical data from digital photographs, a binary search algorithm for finding solutions to~(\ref{equ:minefe}), and using $\efe$-histograms to assess guesses when guessing solutions to (\ref{equ:minefe}).
\subsection{Typical data from digital photographs}
\label{subsec:2.1}
When obtaining a typical data element from a digital photograph, in the present paper, only a small part of the photograph is used. This is because the computational methods used in the present paper are suitable for small systems $(\#\Omega_{S,V}\leq10^{6})$ although, at the expense of clarity and ease of implementation, other more efficient computational methods are possible for investigating larger systems; see Appendix \ref{App:AppendixA} which lists the software used during the research for the present paper and provides a discussion on more efficient computational methods.\\
\begin{figure}[b!]
\centering
\includegraphics[width=0.7\textwidth]{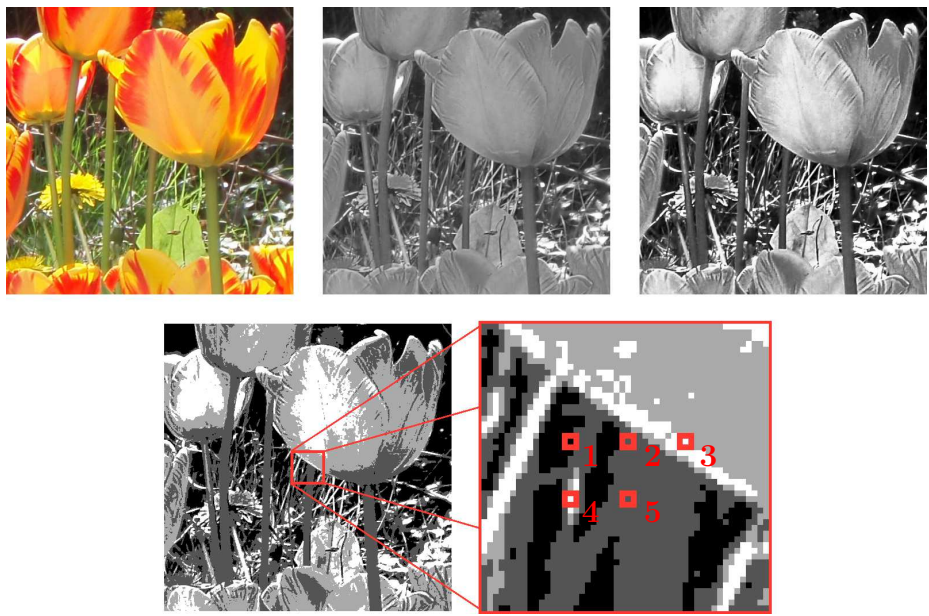}
\caption{Digital photograph sampling using five nodes and a four shade gray scale.}
\label{fig:1}
\end{figure}
\begin{figure}[t!]
\centering
\includegraphics[width=0.7\textwidth]{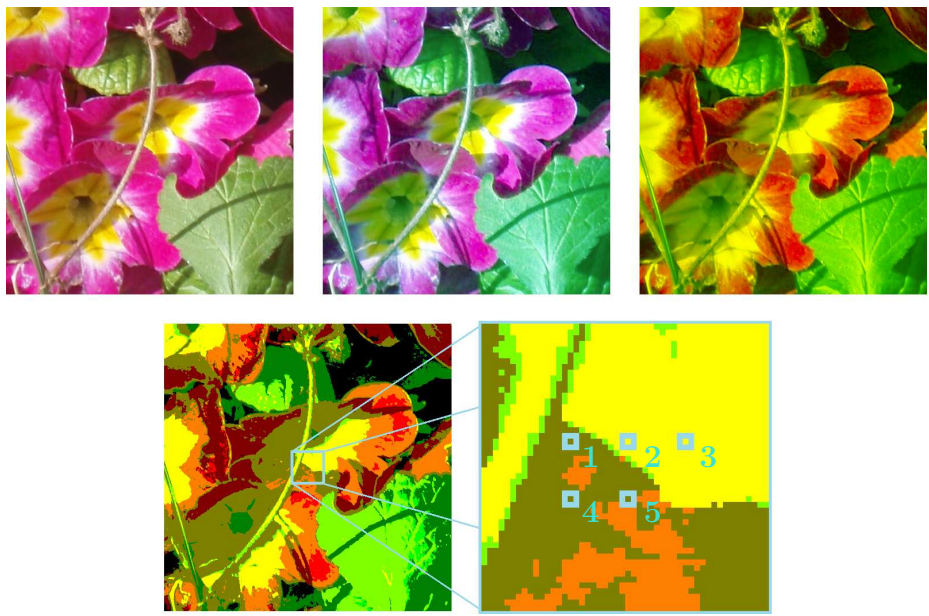}
\caption{Digital photograph sampling using five nodes and a nine colour red/green palette.}
\label{fig:2}
\end{figure}
Figure \ref{fig:1} shows the sampling of a digital photograph such that the typical data element obtained is for a system comprised of five nodes with a four state node repertoire $(\#\Omega_{S,V}=1024)$. Also, in the case of Figure \ref{fig:1}, we are using pixel brightness to determine node state. From top-left to bottom-right, the first image is the original. This image is desaturated (the colours are turned into shades of gray) and then the contrast is enhanced. The contrast enhancement is not required, but it was thought that it might reduce the number of typical data element needed in order to obtain meaningful results. Indeed, when similar, the solutions to (\ref{equ:minefe}) are rather like a type of statistic and, therefore, when using typical data we need to make sure that the sample size is large enough. The image is then posterised (in this case the number of shades is reduced to four giving a four state node repertoire). Finally, five pixels are sampled giving the state of each of the five nodes of the typical data element; see Table \ref{tab:1}. 
\begin{table}[h]
\small
\centering
\caption{Node states of the typical data element obtained from the sampling in Figure \ref{fig:1}.}\smallskip
\begin{tabular}{r|ccccc}
&node 1&node 2&node 3&node 4&node 5\\ \hline
$S_{\tau(1)}$&0.000&147.224&441.673&441.673&147.224
\end{tabular}
\label{tab:1}
\end{table}
To obtain the typical data for the system, this way of obtaining typical data elements is used for several hundred digital photographs. Importantly, what ever the geometric layout of the pixel sampling locations (in Figure~\ref{fig:1} the layout is part of a grid that has adjacent locations every ten pixels), the same layout must be used for all of the digital photographs. Similarly, the same criteria must be used for determining the node states.\\
The sampling in Figure \ref{fig:2} obtains a typical data element for a system comprised of five nodes with a nine state node repertoire $(\#\Omega_{S,V}=59049)$. Here node state is determined by pixel colour over a red/green palette. From top-left to bottom-right, we first have the original image to which colour contrast enhancement is applied. The image is then restricted to colours made up of red and green by setting blue values to zero. The image is then posterised (three values for red and three values for green are used giving a nine state node repertoire). Finally, five pixels are sampled; the result is given in Table \ref{tab:2}.
\begin{table}[ht]
\small
\centering
\caption{Node states of the typical data element obtained from the sampling in Figure \ref{fig:2}.}\smallskip
\begin{tabular}{r|ccccc}
&node 1&node 2&node 3&node 4&node 5\\ \hline
$S_{\tau(1)}$&128,128&255,255&255,255&128,128&128,128
\end{tabular}
\label{tab:2}
\end{table}
We now consider computational methods for finding solutions to (\ref{equ:minefe}).
\subsection{Binary search algorithm}
\label{subsec:2.2}
For any given system, let $n=\#S$ and $m=\#V$.
\begin{enumerate}
\item[Step 1.]
The initial approximation of a solution to (\ref{equ:minefe}) is taken to be the pair $U\in\Psi_{V}$ and $R\in\Psi_{S}$ with $U(v,v')=\tfrac{1}{2}$ and $R(a,b)=\tfrac{1}{2}$ for all $v,v'\in V$, $v\neq v'$, and $a,b\in S$, $a\neq b$, respectively.
\item[Step 2.]
For $U$ and $R$ (shown in Table \ref{tab:3}) a given approximate solution to (\ref{equ:minefe}), let $k=2^{-(q+1)}$ where $q=\min\{i\in\mathbb{N}\colon 2^{i}u_{1,2}\in\mathbb{N}\}$.
\begin{table}[ht]
\small
\centering
\caption{Approximate solution to Equation (\ref{equ:minefe}).}\smallskip
\begin{tabular}{c|cccc}
$U$&$v_{1}$&$v_{2}$&$v_{3}$&$\cdots$\\ \hline
$v_{1}$&1&$u_{1,2}$&$u_{1,3}$&$\cdots$\\
$v_{2}$&$u_{2,1}$&1&$u_{2,3}$&$\cdots$\\
$v_{3}$&$u_{3,1}$&$u_{3,2}$&1&$\cdots$\\
$\vdots$&$\vdots$&$\vdots$&$\vdots$&$\ddots$
\end{tabular}
\qquad
\begin{tabular}{c|cccc}
$R$&node 1&node 2&node 3&$\cdots$\\ \hline
node 1&1&$r_{1,2}$&$r_{1,3}$&$\cdots$\\
node 2&$r_{2,1}$&1&$r_{2,3}$&$\cdots$\\
node 3&$r_{3,1}$&$r_{3,2}$&1&$\cdots$\\
$\vdots$&$\vdots$&$\vdots$&$\vdots$&$\ddots$
\end{tabular}
\label{tab:3}
\end{table}
We now calculate the $\efe$ value of the system for each combination of the entries in Table \ref{tab:4} that give symmetric weighted relations. This is a binary search in the sense that there are two options per entry.
\begin{table}[ht]
\small
\centering
\caption{Binary entries over which to search for approximate solutions to (\ref{equ:minefe}).}\smallskip
\begin{tabular}{c|cccc}
$U$&$v_{1}$&$v_{2}$&$v_{3}$&$\cdots$\\ \hline
$v_{1}$&1&$u_{1,2}\pm k$&$u_{1,3}\pm k$&$\cdots$\\
$v_{2}$&$u_{2,1}\pm k$&1&$u_{2,3}\pm k$&$\cdots$\\
$v_{3}$&$u_{3,1}\pm k$&$u_{3,2}\pm k$&1&$\cdots$\\
$\vdots$&$\vdots$&$\vdots$&$\vdots$&$\ddots$
\end{tabular}
\qquad
\begin{tabular}{c|cccc}
$R$&node 1&node 2&node 3&$\cdots$\\ \hline
node 1&1&$r_{1,2}\pm k$&$r_{1,3}\pm k$&$\cdots$\\
node 2&$r_{2,1}\pm k$&1&$r_{2,3}\pm k$&$\cdots$\\
node 3&$r_{3,1}\pm k$&$r_{3,2}\pm k$&1&$\cdots$\\
$\vdots$&$\vdots$&$\vdots$&$\vdots$&$\ddots$
\end{tabular}
\label{tab:4}
\end{table}
\item[Step 3.]
If the minimum of the $\efe$ values, obtained in Step 2, was given by only one of the pairs of weighted relations tested in Step 2 then redefine $U$ and $R$ as this new pair of weighted relations and return to Step 2. Otherwise, output $U$, $R$ and their associated $\efe$ value, and stop.
\end{enumerate}
If the algorithm did not stop then the chronology of approximate solutions, given by the applications of Step 3, would be a convergent sequence with respect to $d_{1}$ and any of the metrics in Definition \ref{def:d1}. However, for $m\geq2$ and $n\geq2$, both $\Psi_{V}$ and $\Psi_{S}$ are uncountable infinite sets; whereas the number of possible $\efe$ values is finite. Hence, some $\efe$ values result from infinitely many weighted relations in $\Psi_{V}$ and $\Psi_{S}$. It is not surprising then that, as the approximate solutions become closer with respect to $d_{1}$, ultimately the algorithm stops at Step 3. In short, the system defines $U$ and $R$ (up to a certain resolution) under the requirement that the $\efe$ is minimised.\\
This search algorithm works well; see its use in Section \ref{sec:3}. However, the number of $\efe$ values calculated during each application of Step 2 is $2^{(n(n-1)+m(m-1))/2}$. For example, a system with $\#S=\#V=5$ can result in the algorithm calculating more than $10^7$ $\efe$ values before stopping. Hence, the present paper also uses the following, computationally less expensive, method for approximating solutions to (\ref{equ:minefe}); also see Appendix \ref{App:AppendixA} concerning more efficient computational methods.
\subsection{Using $\efe$-histograms obtained from Monte-Carlo methods}
\label{subsec:2.3}
Here we choose $U\in\Psi_{V}$ and $R\in\Psi_{S}$ uniformly at random. With reference to Table \ref{tab:3}, this is done by choosing each off-diagonal upper-triangular entry of $U$ and $R$ uniformly at random from the interval $[0,1]$ (the off-diagonal lower-triangular entries are then those making $U$ and $R$ symmetric). The $\efe$ value is then calculated and stored, and the whole process is repeated producing a list of many thousands of $\efe$ observations from which an $\efe$-histogram can be obtained. With this setup, if we wish to treat $\efe$ as a random variable then standard methods can be used for approximating the probability distribution from the $\efe$ values (although this can be difficult for distributions with very thin tails). In any case, provided enough observations are made, the $\efe$-histogram can be used to help assess guesses when guessing approximate solutions to~(\ref{equ:minefe}).\\
However, we need to be careful concerning what is meant by `choose uniformly at random from the interval $[0,1]$'. Usually, this means that all subintervals of the same length are equally probable events. This is fine for us as long as the length of subintervals is determined by the metric used in Definition \ref{def:fe}, which conveniently is $\dis_{1}$; see Subsection \ref{subsec:3.3} for relevant details.\\
We are now ready to apply the theory.
\section{Examples and investigations}
\label{sec:3}
This section provides insight concerning how the theory performs in practice by way of several informative examples and investigations.
\begin{example}
\label{exa:Output_5.1}
In this example 200 digital photographs of the world around us are used. The typical data is obtained using exactly the method shown in Figure \ref{fig:1}, where the photographs have a four shade gray scale. Hence, $\#T=200$ and the system is comprised of five nodes with a four state node repertoire $(\#\Omega_{S,V}=1024)$. The binary search algorithm of Subsection \ref{subsec:2.2} was applied to $T$ and, after ten cycles, returned the weighted relations in Table \ref{tab:5}. Figure \ref{fig:3} provides a graph illustration of the weighted relations. For $U$, values above 0.2 are indicated with a solid line, whilst values from 0.02 to 0.2 are indicated with a dash line. For $R$, values above 0.9 are indicated with a solid line, whilst values from 0.75 to 0.9 are indicated with a dash line. Although $\#T=200$ is rather small, $T$ has defined the correct relationships under the requirement that $\efe$ is minimised. As described in Subsection \ref{subsec:2.3}, Figure \ref{fig:4} provides an $\efe$-histogram for $T$. For $U$ and $R$ in Table \ref{tab:5} we have $\efe(R,U,T)=4.91623$, to six sf, and this value is indicated in Figure \ref{fig:4} by the triangular marker furthest to the left.
\begin{table}[ht]
\small
\centering
\caption{Approximate solution for Example \ref{exa:Output_5.1}.}\smallskip
\begin{tabular}{c|cccc}
    $U$&            0&      147.224&      294.449&      441.673\\ \hline
      0&            1&0.30908203125&0.05224609375&0.00439453125\\
147.224&0.30908203125&            1&0.41064453125&0.10400390625\\
294.449&0.05224609375&0.41064453125&            1&0.34228515625\\
441.673&0.00439453125&0.10400390625&0.34228515625&1
\end{tabular}\\
\vspace{3mm}
\begin{tabular}{c|ccccc}
   $R$&       node 1&       node 2&       node 3&       node 4&       node 5\\ \hline
node 1&            1&0.99853515625&0.62353515625&0.92041015625&0.78369140625\\
node 2&0.99853515625&            1&0.94580078125&0.75244140625&0.93505859375\\
node 3&0.62353515625&0.94580078125&            1&0.73486328125&0.88330078125\\
node 4&0.92041015625&0.75244140625&0.73486328125&            1&0.98193359375\\
node 5&0.78369140625&0.93505859375&0.88330078125&0.98193359375&1
\end{tabular}
\label{tab:5}
\end{table}
\begin{figure}[ht]
\centering
\includegraphics[width=0.5\textwidth]{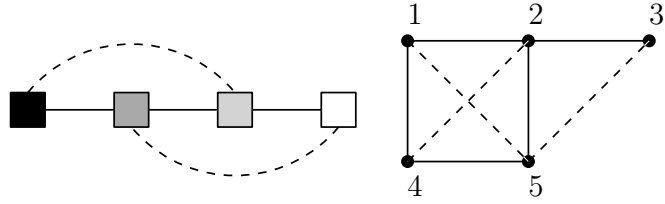}
\captionsetup{justification=centering,margin=2cm}
\caption{Graph illustration of the weighted relations in Table \ref{tab:5}, showing strongest relationships ({\em solid lines}) and intermediate relationships ({\em dash lines}).}
\label{fig:3}
\end{figure}
\begin{figure}[!ht]
\centering
\includegraphics[width=0.75\textwidth]{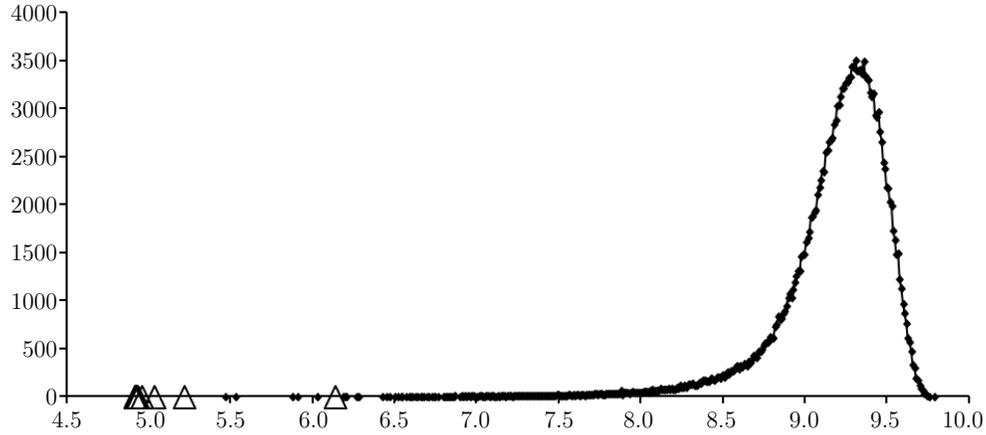}
\captionsetup{justification=centering,margin=2cm}
\caption{An $\efe$-histogram for Example \ref{exa:Output_5.1} using 200,000 observations and a bin interval of 0.01. For each cycle of the binary search algorithm, the $\efe$ value of the approximate solution obtained is shown ({\em triangular marker}).}
\label{fig:4}
\end{figure}
The $\efe$-histogram is negatively skewed with a long left tail and this shape is usual for systems where the probability distribution~$P$, in Definition \ref{def:S}, is far from uniform over $\Omega_{S,V}$.
\end{example}
Example \ref{exa:Output_20} involves a larger system than that of Example \ref{exa:Output_5.1}. Here enlarging the system results in an increase in the difference between the minimum $\efe$ and the location (mean or median) of the $\efe$-histogram.
\begin{example}
\label{exa:Output_20}
In this example 400 digital photographs of the world around us are used. The typical data is obtained using the method shown in Figure \ref{fig:1}, except the number of sampling locations is increased from five to nine to form a three by three grid. Since, $\#T=400$ and $\#\Omega_{S,V}=4^{9}=262144$, this system is too large to apply the binary search algorithm. Instead, Table~\ref{tab:5} in Example \ref{exa:Output_5.1} was used to guess an approximate solution. Figure \ref{fig:5} provides an $\efe$-histogram for $T$.
\begin{figure}[!th]
\centering
\includegraphics[width=0.75\textwidth]{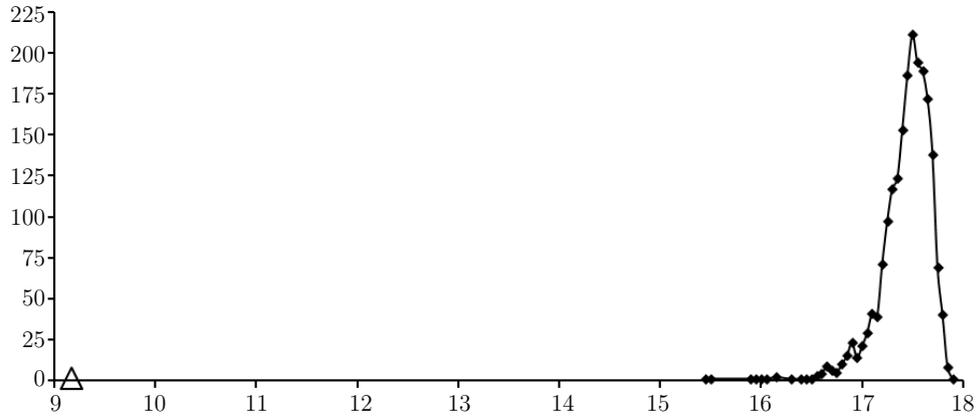}
\captionsetup{justification=centering,margin=2cm}
\caption{An $\efe$-histogram for Example \ref{exa:Output_20} using 2000 observations and a bin interval of 0.05. The $\efe$ value of the approximate solution is shown ({\em triangular marker}).}
\label{fig:5}
\end{figure}
The $\efe$ value for the approximate solution is indicated with a triangular marker and shows that the guess is favorable.
\end{example}
In the next two examples the theory is applied to systems where the probability distribution $P$, in Definition \ref{def:S}, is uniform over $\Omega_{S,V}$. These two examples can be compared with Example \ref{exa:Output_5.1}.
\begin{example}
\label{exa:Output_5.4}
In this example $\#T=200$ and the system is comprised of five nodes with a four state node repertoire $(\#\Omega_{S,V}=1024)$, as is the case in Example \ref{exa:Output_5.1}. However, in the present example, the elements of $T$ are chosen uniformly at random from $\Omega_{S,V}$. Figure \ref{fig:6} provides an $\efe$-histogram for $T$. The binary search algorithm was also applied to $T$ and completed thirteen cycles. The $\efe$-histogram is not negatively skewed and the difference between the $\efe$ value of the approximate solution, found by the binary search algorithm, and the mean of the $\efe$-histogram is only 0.62, to three sf, compared to 4.26 for Example~\ref{exa:Output_5.1}. A second choice for the elements of $T$ was then made uniformly at random from $\Omega_{S,V}$. The approximate solution, found by the binary search algorithm, for the second choice of $T$ was very different to that of the first choice of $T$.
\begin{figure}[!hb]
\centering
\includegraphics[width=0.75\textwidth]{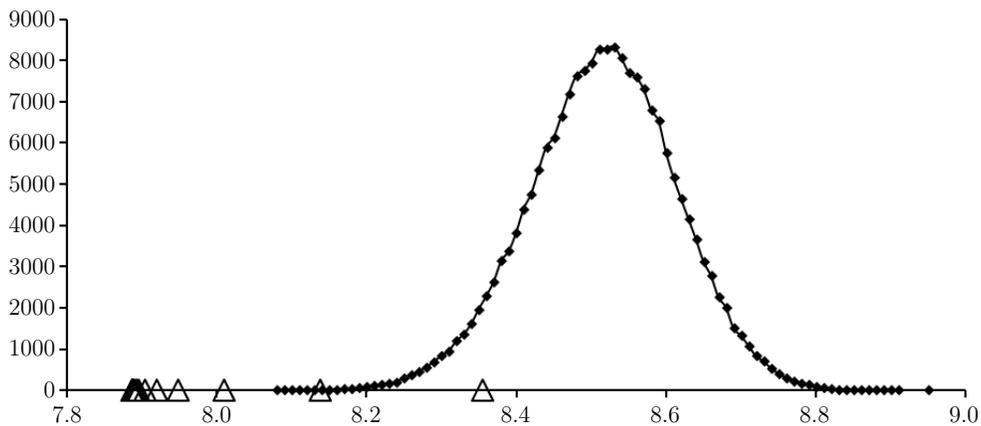}
\captionsetup{justification=centering,margin=2cm}
\caption{An $\efe$-histogram for Example \ref{exa:Output_5.4} using 200,000 observations and a bin interval of 0.01. For each cycle of the binary search algorithm, the $\efe$ value of the approximate solution obtained is shown ({\em triangular marker}).}
\label{fig:6}
\end{figure}
\end{example}
\begin{example}
\label{exa:Output_5.6}
In this example the system is again comprised of five nodes with a four state node repertoire. However, $\#T=1024$ such that there is exactly one observation of each element of $\Omega_{S,V}$ in $T$. In this case if we take the probability distribution $P$, in Definition \ref{def:S}, to be uniform over $\Omega_{S,V}$ then $\efe(\cdot,\cdot,T)=\efe(\cdot,\cdot,P)$; see Definition \ref{def:fe}. In particular, if we let $T'$ denote the typical data in Example \ref{exa:Output_5.4} then the present example is the limit case for Example \ref{exa:Output_5.4} as $\#T'\to\infty$. Figure~\ref{fig:7} provides an $\efe$-histogram for $T$. The binary search algorithm was applied to $T$ but stopped before completing one cycle because the minimum of the $\efe$ values, obtained in Step 2 of the algorithm, was given by many of the pairs of weighted relations tested in Step 2. This is due to a type of symmetry within $T$ which we now consider.\\
We can represent $T$ in the form of a table with each row corresponding to a typical data element; e.g. see Tables \ref{tab:1} and \ref{tab:2}. A transformation of $T$ can be made by, for example, switching the content of columns 3 and 4 of $T$, which is equivalent to switching round the node labels at the top of the columns. Table \ref{tab:6} presents one of the pairs of weighted relations that gave the minimum $\efe$ value obtained in Step 2 of the algorithm. A transformation of $R$ in Table \ref{tab:6} can be made by switching the content of columns 3 and 4, and then switching the content of rows 3 and 4. Clearly, the $\efe$ is invariant under performing both the transformation to $T$ and the transformation to $R$. Now, because $T$ is comprised of exactly one observation of each element of $\Omega_{S,V}$, the rows of the transformed version of $T$ can be reordered to give back $T$ before the transformation was made. Since $\efe$ is invariant regarding the order of typical data elements, the $\efe$ value given by $T$ relative to $U$ and the transformed version of $R$ is the same as $\efe(R,U,T)$. Since $R$ and its transformed version are different, the minimum of the $\efe$ values, obtained in Step 2 of the algorithm, is given by more than one of the pairs of weighted relations tested in Step 2. The same argument also applies to the solutions to (\ref{equ:minefe}) and, consequently, these solutions vary greatly with respect to $\dis_{1}$.\\
Also in the present example, for every fixed $U$ and $R$, the transformation on $T$ is a type of $\efe$ preserving involution (i.e. $T$ has a type of symmetry).
\begin{figure}[!b]
\centering
\includegraphics[width=0.75\textwidth]{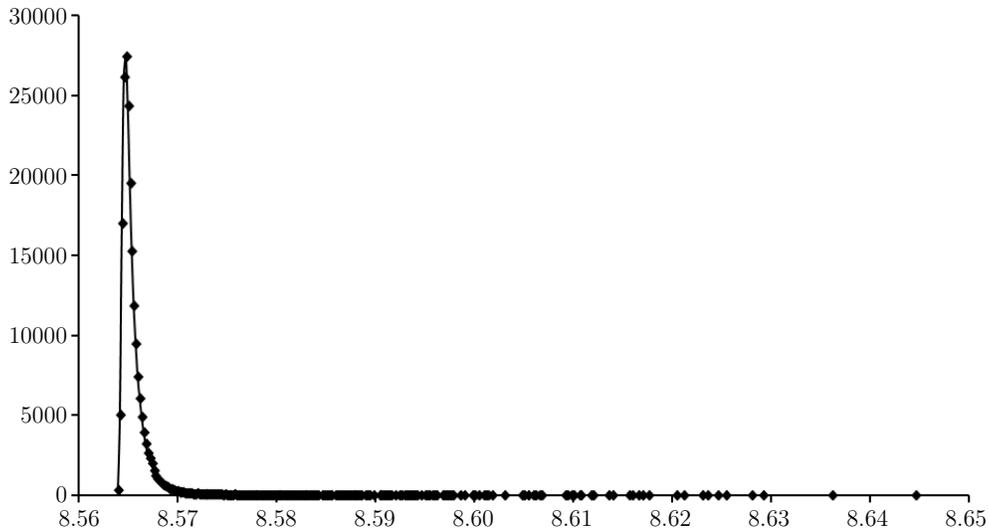}
\captionsetup{justification=centering,margin=2cm}
\caption{An $\efe$-histogram for Example \ref{exa:Output_5.6} using 200,000 observations and a bin interval of 0.0002.}
\label{fig:7}
\end{figure}
\begin{table}[!t]
\small
\centering
\captionsetup{justification=centering,margin=2cm}
\caption{One of the pairs of weighted relations, in Example \ref{exa:Output_5.6}, that gave the minimum $\efe$ value obtained in Step 2 of the binary search algorithm.}\smallskip
\begin{tabular}{c|cccc}
    $U$&$v_{1}$&$v_{2}$&$v_{3}$&$v_{4}$\\ \hline
$v_{1}$&      1&   0.75&   0.75&   0.25\\
$v_{2}$&   0.75&      1&   0.25&   0.25\\
$v_{3}$&   0.75&   0.25&      1&   0.25\\
$v_{4}$&   0.25&   0.25&   0.25&      1
\end{tabular}
\qquad
\begin{tabular}{c|ccccc}
   $R$&node 1&node 2&node 3&node 4&node 5\\ \hline
node 1&     1&  0.25&  0.25&  0.75&  0.25\\
node 2&  0.25&     1&  0.75&  0.25&  0.25\\
node 3&  0.25&  0.75&     1&  0.25&  0.25\\
node 4&  0.75&  0.25&  0.25&     1&  0.25\\
node 5&  0.25&  0.25&  0.25&  0.25&     1
\end{tabular}
\label{tab:6}
\end{table}
More generally beyond the present example, the extent to which $T$ is invariant, up to the order of its rows following such transformations, may be important regarding the shape of the $\efe$-histogram.
\end{example}
\begin{remark}
\label{rem:group}
Note that, in Example \ref{exa:Output_5.6}, the involution on $T$ can be put into a broader context as an element of a group of permutations of the contents of the columns of $T$. Similarly, the transformation applied to $R$ is an element of a group of such transformations on $\Psi_{S}$ . There is also a similar group of transformations on $\Psi_{V}$. Beyond Example \ref{exa:Output_5.6}, for a given system it may be that such a transformation on $\Psi_{S}$ acts almost as the identity on the solutions to (\ref{equ:minefe}). In this case the system has defined geometry on $S$, under the requirement that the $\efe$ is minimised, that has a symmetry such as a rotation or reflection etc.\\
Upon consideration of the positively skewed $\efe$-histogram in Figure \ref{fig:7}, the reader might ask why we do not look for pairs of weighted relations that maximise $\efe$ instead of minimise it. For every given system, the weighted relations $U\in\Psi_{V}$ and $R\in\Psi_{S}$ that maximise $\efe$ are the constant functions which everywhere take the value 1; see Definition \ref{def:fe}.
\end{remark}
In the next example the typical data is obtained from colour digital photographs.
\begin{example}
\label{exa:Output_6.1}
In this example 600 digital photographs of the world around us are used. The typical data is obtained using exactly the method shown in Figure \ref{fig:2}, where the photographs have a nine colour red/green palette. Hence, $\#T=600$ and the system is comprised of five nodes with a nine state node repertoire $(\#\Omega_{S,V}=59049)$. The system is too large to apply the binary search algorithm. Hence, in this case, approximate solutions to (\ref{equ:minefe}) are guessed and their associated $\efe$ values are compared with an $\efe$-histogram for the system. Table \ref{tab:7} presents the guess for $R$; the right hand side of Figure \ref{fig:3} provides a graph illustration for $R$. Table \ref{tab:8} gives two different guesses, $U'$ and $U$, for the weighted relation on $V$ (note that the node repertoire labels are of the form red,green i.e.\ $255,0$ is the label for pure red).
\begin{table}[!ht]
\small
\centering
\captionsetup{justification=centering,margin=2cm}
\caption{Guess for $R$ in Example \ref{exa:Output_6.1}.}\smallskip
\begin{tabular}{c|ccccc}
   $R$&node 1&node 2&node 3&node 4&node 5\\ \hline
node 1&     1&  0.95&  0.65&  0.95&  0.75\\
node 2&  0.95&     1&  0.95&  0.75&  0.95\\
node 3&  0.65&  0.95&     1&  0.60&  0.75\\
node 4&  0.95&  0.75&  0.60&     1&  0.95\\
node 5&  0.75&  0.95&  0.75&  0.95&     1
\end{tabular}
\label{tab:7}
\end{table}
\begin{table}[!tp]
\small
\centering
\captionsetup{justification=centering,margin=2cm}
\caption{Guesses for the weighted relation on $V$ in Example \ref{exa:Output_6.1}.}\smallskip
\begin{tabular}{r|ccccccccc}
   $U'$& 0,0&128,0&255,0&0,128&128,128&255,128&0,255&128,255&255,255\\ \hline
    0,0&   1& 0.45& 0.15& 0.45&   0.25&   0.10& 0.15&   0.10&   0.05\\
  128,0&0.45&    1& 0.45& 0.25&   0.45&   0.25& 0.10&   0.15&   0.10\\
  255,0&0.15& 0.45&    1& 0.10&   0.25&   0.45& 0.05&   0.10&   0.15\\
  0,128&0.45& 0.25& 0.10&    1&   0.45&   0.15& 0.45&   0.25&   0.10\\
128,128&0.25& 0.45& 0.25& 0.45&      1&   0.45& 0.25&   0.45&   0.25\\
255,128&0.10& 0.25& 0.45& 0.15&   0.45&      1& 0.10&   0.25&   0.45\\
  0,255&0.15& 0.10& 0.05& 0.45&   0.25&   0.10&    1&   0.45&   0.15\\
128,255&0.10& 0.15& 0.10& 0.25&   0.45&   0.25& 0.45&      1&   0.45\\
255,255&0.05& 0.10& 0.15& 0.10&   0.25&   0.45& 0.15&   0.45&      1
\end{tabular}\\
\vspace{3mm}
\begin{tabular}{r|ccccccccc}
    $U$& 0,0&128,0&255,0&0,128&128,128&255,128&0,255&128,255&255,255\\ \hline
    0,0&   1& 0.11& 0.04& 0.11&   0.45&   0.09& 0.04&   0.09&   0.16\\
  128,0&0.11&    1& 0.11& 0.04&   0.11&   0.45& 0.02&   0.04&   0.09\\
  255,0&0.04& 0.11&    1& 0.02&   0.04&   0.11&0.015&   0.02&   0.04\\
  0,128&0.11& 0.04& 0.02&    1&   0.11&   0.04& 0.11&   0.45&   0.09\\
128,128&0.45& 0.11& 0.04& 0.11&      1&   0.11& 0.04&   0.11&   0.45\\
255,128&0.09& 0.45& 0.11& 0.04&   0.11&      1& 0.02&   0.04&   0.11\\
  0,255&0.04& 0.02&0.015& 0.11&   0.04&   0.02&    1&   0.11&   0.04\\
128,255&0.09& 0.04& 0.02& 0.45&   0.11&   0.04& 0.11&      1&   0.11\\
255,255&0.16& 0.09& 0.04& 0.09&   0.45&   0.11& 0.04&   0.11&      1
\end{tabular}
\label{tab:8}
\end{table}
Figure \ref{fig:8} provides a graph illustration of the weighted relations in Table \ref{tab:8}. Figure \ref{fig:9} provides an $\efe$-histogram for $T$ and, whilst $U'$ is an obvious first guess, it is $U$ that gives the lower $\efe$ value.
\begin{figure}[!hp]
\centering
\includegraphics[width=0.45\textwidth]{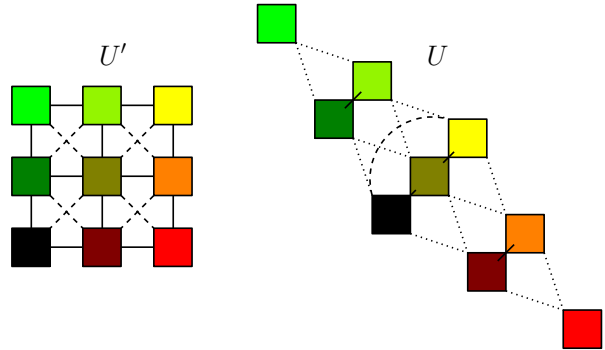}
\captionsetup{justification=centering,margin=1cm}
\caption{Graph illustration of the weighted relations in Table \ref{tab:8}, showing strongest relationships ({\em solid lines}), intermediate relationships ({\em dash lines}) and, for $U$ only, weak intermediate relationships ({\em dotted lines}).}
\label{fig:8}
\end{figure}
\begin{figure}[!hp]
\centering
\includegraphics[width=0.75\textwidth]{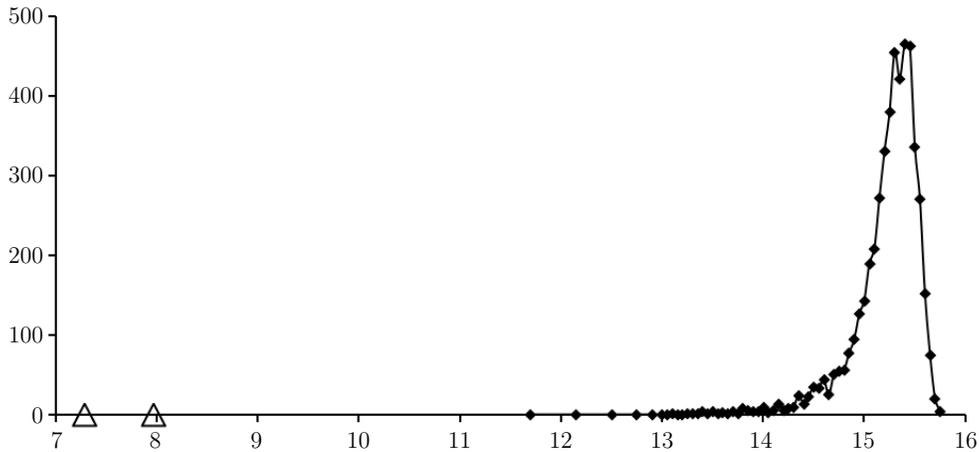}
\captionsetup{justification=centering,margin=1cm}
\caption{An $\efe$-histogram for Example \ref{exa:Output_6.1} using 5000 observations and a bin interval of 0.05. The values $\efe(R,U',T)=7.97695$ and $\efe(R,U,T)=7.28947$, to six sf, are shown ({\em triangular marker}).}
\label{fig:9}
\end{figure}
With respect to $U$, elements of $V$ of the form $x,x+a$, where $a$ is constant, are more strongly related than elements of the form $x,a-x$; i.e. with respect to $U$, the representative of pure red is very distinct from the representative of pure green. We note that, given the $\efe$-histogram and the system, $R$ and $U$ appear to be favorable and appropriate weighted relations. However, $R$ and $U$ are still only guesses and actual solutions to (\ref{equ:minefe}) could be somewhat different. Ideally we would use the binary search algorithm on a similar system but with a larger node repertoire and a larger typical data, but this comes at a high computational expense.
\end{example}
We now have the first of three investigations concerning the theory.
\subsection{Changing the base of a system}
\label{subsec:3.1}
In this subsection we look at base changing operations, base branching structure, and the affect of changing base on $\efe$-histograms.
\subsubsection{Base changing operations}
\label{subsubsec:BCO}
Here we look at two different types of base changing operations. One of the types of operations involves combining nodes whilst the other involves splitting nodes. Many operations of the same type are equivalent in the sense that the resulting systems only differ in the choice of labels used for nodes or repertoire elements. Furthermore, every combining operation is the inverse of some splitting operation and vice versa. As an example, suppose we have a system with $\#S=6$ and $\#V=2$. Table \ref{tab:9} shows one of the ways to combine the nodes so that the resulting system has $\#S'=3$ and $\#V'=4$.
\begin{table}[h]
\small
\centering
\captionsetup{justification=centering,margin=2cm}
\caption{Example of changing the base of a system.}\smallskip
\begin{tabular}{c|cccccc}
             &  node 1&  node 2&  node 3&  node 4&  node 5& node 6\\ \hline
$S_{\tau(1)}$& $v_{2}$& $v_{2}$& $v_{1}$& $v_{1}$& $v_{2}$&$v_{1}$\\
     $\vdots$&$\vdots$&$\vdots$&$\vdots$&$\vdots$&$\vdots$&$\vdots$
\end{tabular}\\
\vspace{3mm}
\begin{tabular}{c|ccc}
               &         node 1&         node 2&          node 3\\
Node allocation&  (node 1,node 4)&  (node 5,node 2)&  (node 6,node 3)\\ \hline
 $S'_{\tau(1)}$&($v_{2}$,$v_{1}$)&($v_{2}$,$v_{2}$)&($v_{1}$,$v_{1}$)\\
       $\vdots$&       $\vdots$&       $\vdots$&         $\vdots$
\end{tabular}\\
\vspace{3mm}
\begin{tabular}{ccccc}
                     &         $v'_{1}$&         $v'_{2}$&         $v'_{3}$&     $v'_{4}$\\
Repertoire allocation&($v_{2}$,$v_{2}$)&($v_{2}$,$v_{1}$)&($v_{1}$,$v_{1}$)&($v_{1}$,$v_{2}$)
\end{tabular}\\
\vspace{3mm}
\begin{tabular}{c|ccc}
               &         node 1&         node 2&         node 3\\ \hline
 $S'_{\tau(1)}$&       $v'_{2}$&       $v'_{1}$&       $v'_{3}$\\
       $\vdots$&       $\vdots$&       $\vdots$&         $\vdots$
\end{tabular}
\label{tab:9}
\end{table}
More generally, from Table \ref{tab:9}, we see that there are $6!$ different possible node allocations and~$4!$ different possible repertoire allocations. Hence, in this case, the total number of such combining operations (or splitting operations if reversing the process) is $6!4!=17280$, and this corresponds to the fact that $\#\Omega_{S,V}=2^{6}=4^{3}=\#\Omega_{S',V'}$. Similarly we note that $\#\Omega_{S,V}=2^{6}=8^{2}$ so that there are $6!8!$ different combining operations resulting in systems with two nodes and an eight state node repertoire.\\
Such operations do have an affect on $\efe$-histograms. Indeed, since $\#\Omega_{S,V}=2^{6}=64^{1}$, we can apply a combining operation that results in a system with one node and a node repertoire that has a state for every state of the system. The resulting $\efe$-histogram has a standard deviation of zero and is located at the maximum possible $\efe$ value for a system with 64 states, which is $\log_{2}(64)=6$; see Definition \ref{def:fe} and Subsection \ref{subsec:2.3}. We will further consider the affect of base changing operations on $\efe$-histograms in Subsection \ref{subsubsec:BEH}.
\subsubsection{Base branching structure}
\label{subsubsec:BBS}
We have already noted in Subsection \ref{subsubsec:BCO} that many base changing operations are equivalent in the sense that the resulting systems only differ in the choice of labels used for nodes or repertoire elements. The advantage of this redundancy, for appropriate systems, is that it allows us to apply a splitting operation in the first instance (i.e. we can start with a repertoire allocation) instead of being restricted to combining operations. Alternatively we can avoid this redundancy by treating a system in its initial base as being at the bottom of a branching structure which branches under combining nodes such that each branch terminates with the system represented by a single node. Table \ref{tab:10} shows one such branch. We note that, with regard to weighted relations on the nodes, the order of the columns in Table \ref{tab:10} is not important as long as column heading and column contents are kept together. Furthermore, there is no repertoire allocation since we retain the vector form of the node states. These simplifications reduce the number of combining operations discussed in Subsection \ref{subsubsec:BCO} from 17280 to~120.
\begin{table}[h]
\small
\centering
\captionsetup{justification=centering,margin=2cm}
\caption{One branch of a base branching structure.}\smallskip
\begin{tabular}{c|cccccc}
             &  node 1&  node 2&  node 3&  node 4&  node 5& node 6\\ \hline
$S_{\tau(1)}$& $v_{2}$& $v_{2}$& $v_{1}$& $v_{1}$& $v_{2}$&$v_{1}$\\
     $\vdots$&$\vdots$&$\vdots$&$\vdots$&$\vdots$&$\vdots$&$\vdots$
\end{tabular}\\
\vspace{3mm}
\begin{tabular}{c|ccc}
        Branch&  (node 1,node 4)&  (node 5,node 2)&  (node 6,node 3)\\ \hline
$S'_{\tau(1)}$&($v_{2}$,$v_{1}$)&($v_{2}$,$v_{2}$)&($v_{1}$,$v_{1}$)\\
      $\vdots$&       $\vdots$&       $\vdots$&         $\vdots$
\end{tabular}\\
\vspace{3mm}
\begin{tabular}{c|c}
  End of Branch&((node 5,node 2),(node 1,node 4),(node 6,node 3))\\ \hline
$S''_{\tau(1)}$&(($v_{2}$,$v_{2}$),($v_{2}$,$v_{1}$),($v_{1}$,$v_{1}$))\\
       $\vdots$&$\vdots$
\end{tabular}
\label{tab:10}
\end{table}\\
Now the definition of float entropy in Definition \ref{def:fe} uses only one base for a system. However, multi-relational float entropy (see Subsection \ref{subsec:4.1}) involves more weighted relations by involving more than one base. For some systems it may be that particular bases are important regarding weighted relations that minimise $\efe$ and/or regarding maximising the length of the left tail of the $\efe$-histogram. Indeed, we have already noted that combining all of the nodes of a system into a single node is not good in this respect, showing that other bases are preferable; see Subsection \ref{subsubsec:BCO}. Moreover, a change of base may allow a system to define weighted relations at a higher level of meaning. For example, the solutions to (\ref{equ:minefe}) may define (to a high resolution) a weighted relation $R$ on the nodes of some system, giving two dimensional geometry. For a particular branch of the base branching structure, the states of the composite nodes will be images under the geometry (given by $R$) on the nodes that have been combined. Hence, under the requirement of further minimising $\efe$, the system defines a weighted relation on the repertoire of the composite nodes and hence on a set of images; see Subsection \ref{subsec:4.1}. This may have relevance to some aspects of the Gestalt theory of visual perception; see \cite{Wagemans}.\\
Comparing base changing operations with base branching structure we note that allowing arbitrary application of successive combining and splitting operations may provide too much freedom in the sense that a system may then define too many weighted relations (under requirements such as the minimisation of $\efe$) to specify a single consistent interpretation of the states of the system. Hence, restricting the theory to the base branching structure may be desirable (or perhaps at least to some further generalisation of the base branching structure). In spite of this we will now look at the affect of combining nodes and splitting nodes on $\efe$-histograms.
\subsubsection{Base and $\efe$-histograms}
\label{subsubsec:BEH}
The following lemma says that uniform randomness is preserved by both combining and splitting operations.
\begin{lemma}
\label{lem:randtorand}
Suppose we have a system where the probability distribution $P$ in Definition \ref{def:S} is uniform over $\Omega_{S,V}$. For any given combining or splitting operation, as described in Subsection \ref{subsubsec:BCO}, let $S'$ and $V'$ (with $\#V'$ as small as possible) be such that $\Omega_{S',V'}$ is the codomain of $\Omega_{S,V}$ under the operation. Furthermore, for $S'_{i}$ an element of the image of $\Omega_{S,V}$ under the operation, define $P'(S'_{i}):=P(A_{S'_{i}})$, where $A_{S'_{i}}$ is the preimage of $S'_{i}$. Then $P'$ is a uniform probability distribution over $\Omega_{S',V'}$.
\end{lemma}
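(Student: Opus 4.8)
The plan is to regard the base-changing operation as an explicit map $\phi\colon\Omega_{S,V}\to\Omega_{S',V'}$ and to show that it pushes the uniform distribution forward to the uniform distribution. The essential observation, already implicit in Subsection~\ref{subsubsec:BCO}, is that combining nodes merely regroups the coordinates of a data element into tuples (and splitting reverses this), so that $\phi$ is a bijection; in particular the cardinalities match, $\#\Omega_{S,V}=\#\Omega_{S',V'}$, exactly as in the worked count $2^{6}=4^{3}$ there. Because $P'$ is defined on fibres via $P'(S'_{i}):=P(A_{S'_{i}})$ with $A_{S'_{i}}$ the preimage of $S'_{i}$, and the fibres $\{A_{S'_{i}}\}$ partition $\Omega_{S,V}$, the additivity of $P$ from Remark~\ref{rem:P} immediately gives $\sum_{i}P'(S'_{i})=P(\Omega_{S,V})=1$, so $P'$ is a genuine probability distribution before uniformity is even addressed.

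First I would fix the node grouping underlying the operation and write each composite state as the tuple of the states of its constituent original nodes; since every tuple over $V$ occurs among the elements of the full space $\Omega_{S,V}$, the repertoire $V'$ of minimal size that represents these composite states has exactly $\#V'=m^{g}$ elements for blocks of common size $g$, and $\phi$ is then a bijection onto $\Omega_{S',V'}$. For a splitting operation I would simply invoke the fact, noted in Subsection~\ref{subsubsec:BCO}, that it is the inverse of a combining operation, so that $\phi$ is again a bijection. Consequently each fibre $A_{S'_{i}}$ is a singleton $\{S_{j(i)}\}$, and uniformity of $P$ yields
\begin{gather*}
P'(S'_{i})=P(A_{S'_{i}})=P(S_{j(i)})=\frac{1}{\#\Omega_{S,V}}=\frac{1}{\#\Omega_{S',V'}},
\end{gather*}
which is independent of $i$. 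Hence $P'$ is uniform over $\Omega_{S',V'}$, as required.

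The step I expect to carry the real weight is the verification that $\phi$ is a bijection onto the correct codomain --- that is, turning the informal, example-driven description of combining and splitting in Subsection~\ref{subsubsec:BCO} into the precise statements that $\phi$ is injective, that the minimality of $\#V'$ makes it surjective, and that $\#\Omega_{S,V}=\#\Omega_{S',V'}$. Once this structural fact is pinned down the probabilistic part is routine. A slightly more robust formulation, which I would keep in reserve in case one wishes to allow operations whose fibres are not singletons, is to prove instead that all fibres $A_{S'_{i}}$ share a common cardinality $c$; then $\#\Omega_{S,V}=c\,\#\Omega_{S',V'}$ and the same uniform value $P'(S'_{i})=c/\#\Omega_{S,V}=1/\#\Omega_{S',V'}$ drops out, so the conclusion survives under the weaker equal-fibre hypothesis.
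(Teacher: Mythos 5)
Your argument is correct and rests on exactly the same observation as the paper's own (one-line) proof: the combining or splitting operation is a bijection from $\Omega_{S,V}$ onto $\Omega_{S',V'}$, so the fibres are singletons and uniformity is immediate. The extra detail you supply (the count $\#V'=m^{g}$, the check that $P'$ sums to $1$, and the equal-fibre generalisation) is sound elaboration rather than a different route.
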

\begin{proof}
Immediate since the operation is a bijection from $\Omega_{S,V}$ to $\Omega_{S',V'}$.
\end{proof}
We now consider the case where $P$ is far from uniform over $\Omega_{S,V}$. Because computational recourses are limited, a choice had to be made between looking at the affects of many different base changing operations on just one system and looking at the affects of one or two different base changing operations per system for several different systems. The latter was chosen in order to avoid accidentally giving results for some highly unusual system. Typical data was obtained for each of the systems from digital photographs. The method in Figure \ref{fig:1} was used except the number of shades in the gray scale, the location of the sampling grid and the number of nodes involved varied from system to system; details are given on the left-hand side of Table \ref{tab:11}.
\begin{table}[h]
\small
\centering
\captionsetup{justification=centering,margin=2cm}
\caption{Seven systems from which $\efe$ observations were taken both before and after the application of a base changing operation.}\smallskip
\begin{tabular}{cccr|c|ccr}
System&$\#S$&$\#V$&$\#\efe$-observations&Operation&$\#S'$&$\#V'$&$\#\efe$-observations\\ \hline
     1&    6&    2&               400000&  combine&     3&     4&               400000\\
     2&    3&    4&                 5000&    split&     6&     2&                 5000\\
     3&    3&    4&                 5000&    split&     6&     2&                 5000\\
     4&    3&    9&               400000&    split&     6&     3&               400000\\
     5&    4&    9&               100000&    split&     8&     3&               100000\\
     6&    6&    3&               400000&  combine&     3&     9&               400000\\
     7&    6&    3&               400000&  combine&     3&     9&               400000
\end{tabular}
\label{tab:11}
\end{table}\\
For each of the systems, 400 typical data elements were collected. Subsequently a large number of $\efe$ observations were obtained using the method described in Subsection \ref{subsec:2.3}. The same number of $\efe$ observations was then obtained having applied a base changing operation to the typical data. Apart from the size of base (i.e.\ the size of the repertoire $\#V'$ in Table~\ref{tab:11}), the base changing operation was chosen at random for each system. With respect to the seven systems in Table \ref{tab:11}, we note that System 3 is actually the same as System 2 in the sense that the same typical data is used. However, a different base changing operation has been applied to System 3 than that applied to System 2. Similarly, System 6 and System 7 are the same but have had different base changing operations applied. For each system, Figure \ref{fig:10} compares statistics obtained from the $\efe$ observations, made before the change of base, with statistics obtained from the $\efe$ observations made after the change of base.
\begin{figure}[!t]
\centering
\includegraphics[width=0.75\textwidth]{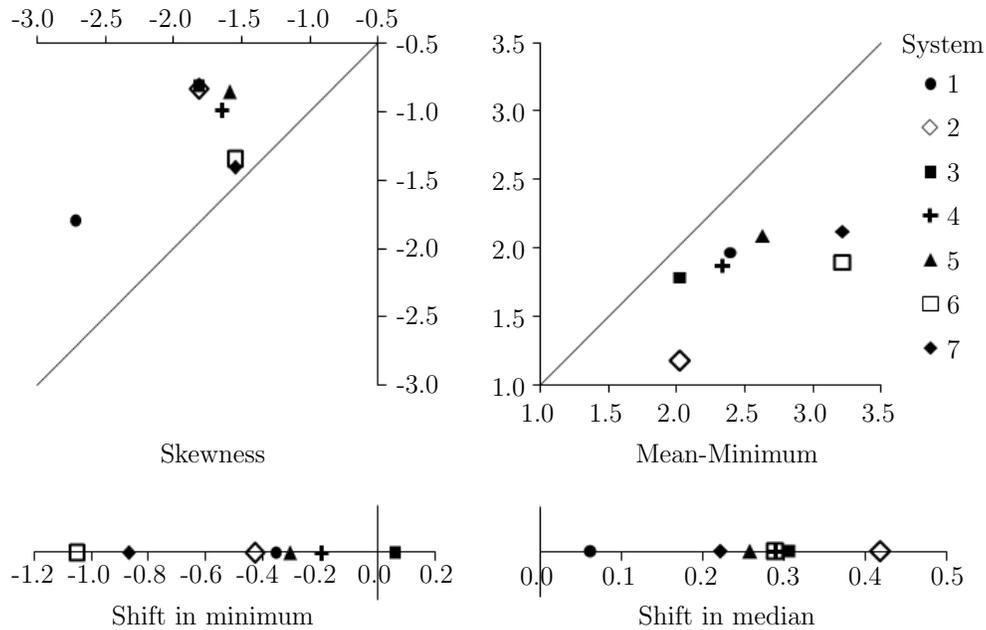}
\captionsetup{justification=centering,margin=1cm}
\caption{For each system the figure shows: the skewness, and mean minus minimum, of the $\efe$-histogram when using the original base ($x$ {\em axis}) and after changing to the alternative base ($y$ {\em axis}); the shift in the minimum and the shift in the median when changing back to the original base from the alternative base.}
\label{fig:10}
\end{figure}
Note that, in Figure \ref{fig:10}, skewness is measured using the adjusted Fisher-Pearson standardized moment coefficient.\\
For each of the systems investigated it can be seen from Figure \ref{fig:10} that, when changing back to the original base from the alternative base, the $\efe$-histogram undergoes an increase in negative skewness and mean minus minimum as well as a right shift in location. Furthermore, for most of the systems, the minimum $\efe$ value observed, when using the original base, is to the left of the minimum $\efe$ value observed when using the alternative base. These results suggest that the bases maximising the length of the left tail of the $\efe$-distribution (here approximated by an $\efe$-histogram) are important for the theory presented in the present paper. One caveat concerning this investigation is that, for each system, the variance in the observed minimum might be rather high because the distribution being sampled has a very thin left tail. We now move onto our next investigation.
\subsection{Joining and partitioning systems}
\label{subsec:3.2}
Consider the visual cortex and the auditory cortex of the brain. There is evidence that the brain defines relationships between the states of these different brain regions at a high level of meaning (i.e.\ between images and sounds); see \cite{Shinkareva}. However, at the lower level of meaning at which the images and the sounds are defined it may be that the two brain regions are self contained. The brains of two different people is perhaps a more overt example of self containment or privacy.\\
In the context of the theory of the present paper, suppose we have two systems. We can solve (\ref{equ:minefe}) for both of the systems separately and sum the resulting minimised $\efe$s. If this sum is significantly more than the minimum $\efe$ obtained when joining the two systems then it makes sense to consider the two systems as a single system. Examples of such systems are easy to construct. Conversely, for a given system, it might be possible to partition the set of nodes $S$ such that the sum of the minimum $\efe$s of the resulting systems is less than that of the original system. In this case, at least in the given basis, it makes sense to consider the original system as several different systems. It is not so easy to find examples of such systems, at least when the systems are small. However, Table \ref{tab:12} provides an example where the minimum $\efe$ of the system is greater than 3 whilst, after partitioning, the sum of the minimum $\efe$s is only 2.8.
\begin{table}[h]
\small
\centering
\captionsetup{justification=centering,margin=2cm}
\caption{Typical data of a system before and after a partition which results in lowering the total minimum $\efe$.}\smallskip
\begin{tabular}{c|cccc}
              &  node 1&  node 2&  node 3&  node 4\\ \hline
 $S_{\tau(1)}$& $v_{1}$& $v_{1}$& $v_{1}$& $v_{1}$\\
 $S_{\tau(2)}$& $v_{1}$& $v_{1}$& $v_{1}$& $v_{3}$\\
 $S_{\tau(3)}$& $v_{1}$& $v_{1}$& $v_{3}$& $v_{1}$\\
 $S_{\tau(4)}$& $v_{1}$& $v_{1}$& $v_{3}$& $v_{3}$\\
 $S_{\tau(5)}$& $v_{1}$& $v_{1}$& $v_{3}$& $v_{3}$\\
 $S_{\tau(6)}$& $v_{1}$& $v_{2}$& $v_{1}$& $v_{1}$\\
 $S_{\tau(7)}$& $v_{1}$& $v_{2}$& $v_{1}$& $v_{3}$\\
 $S_{\tau(8)}$& $v_{1}$& $v_{2}$& $v_{3}$& $v_{1}$\\
 $S_{\tau(9)}$& $v_{1}$& $v_{2}$& $v_{3}$& $v_{3}$\\
$S_{\tau(10)}$& $v_{1}$& $v_{2}$& $v_{3}$& $v_{3}$\\
$S_{\tau(11)}$& $v_{2}$& $v_{1}$& $v_{1}$& $v_{1}$\\
$S_{\tau(12)}$& $v_{2}$& $v_{1}$& $v_{1}$& $v_{3}$\\
$S_{\tau(13)}$& $v_{2}$& $v_{1}$& $v_{3}$& $v_{1}$\\
$S_{\tau(14)}$& $v_{2}$& $v_{1}$& $v_{3}$& $v_{3}$\\
$S_{\tau(15)}$& $v_{2}$& $v_{1}$& $v_{3}$& $v_{3}$\\
$S_{\tau(16)}$& $v_{2}$& $v_{1}$& $v_{1}$& $v_{1}$\\
$S_{\tau(17)}$& $v_{2}$& $v_{1}$& $v_{1}$& $v_{3}$\\
$S_{\tau(18)}$& $v_{2}$& $v_{1}$& $v_{3}$& $v_{1}$\\
$S_{\tau(19)}$& $v_{2}$& $v_{1}$& $v_{3}$& $v_{3}$\\
$S_{\tau(20)}$& $v_{2}$& $v_{1}$& $v_{3}$& $v_{3}$\\
$S_{\tau(21)}$& $v_{2}$& $v_{2}$& $v_{1}$& $v_{1}$\\
$S_{\tau(22)}$& $v_{2}$& $v_{2}$& $v_{1}$& $v_{3}$\\
$S_{\tau(23)}$& $v_{2}$& $v_{2}$& $v_{3}$& $v_{1}$\\
$S_{\tau(24)}$& $v_{2}$& $v_{2}$& $v_{3}$& $v_{3}$\\
$S_{\tau(25)}$& $v_{2}$& $v_{2}$& $v_{3}$& $v_{3}$
\end{tabular}
\quad
\begin{tabular}{c|cc}
              &  node 1&  node 2\\ \hline
$S'_{\tau(1)}$& $v_{1}$& $v_{1}$\\
$S'_{\tau(2)}$& $v_{1}$& $v_{2}$\\
$S'_{\tau(3)}$& $v_{2}$& $v_{1}$\\
$S'_{\tau(4)}$& $v_{2}$& $v_{1}$\\
$S'_{\tau(5)}$& $v_{2}$& $v_{2}$
\end{tabular}
\qquad
\begin{tabular}{c|cc}
               &  node 3&  node 4\\ \hline
$S''_{\tau(1)}$& $v_{1}$& $v_{1}$\\
$S''_{\tau(2)}$& $v_{1}$& $v_{3}$\\
$S''_{\tau(3)}$& $v_{3}$& $v_{1}$\\
$S''_{\tau(4)}$& $v_{3}$& $v_{3}$\\
$S''_{\tau(5)}$& $v_{3}$& $v_{3}$
\end{tabular}
\label{tab:12}
\end{table}
The result was obtained from the system by investigating an $\efe$-histogram involving $4\cdot 10^{6}$ observations, and by running the binary search algorithm. Note that the typical data of the system is such that the partitioned systems are independent when considered as random variables (this is why the number of typical data elements can be divided by five after partitioning).\\
The number of different partitions that there are of a system with $n$ nodes is given by the Bell number $B_{n}$. For $\#S=n$ we have $B_{n}=B^{(n)}(0)$, where $B(x)=\exp(e^{x}-1)$ is the generating function for the Bell number; see \cite{Mansour}. This number quickly becomes large as $n$ increases; making the investigation of all the different partitions of a system computationally expensive for all but small systems. In the final investigation of this section we consider the metric used in Definition \ref{def:fe}.
\subsection{Metric independence}
\label{subsec:3.3}
Remark \ref{rem:fe} suggests that the theory presented in the present paper is independent of the choice of metric used in Definition~\ref{def:fe} provided that the metric determines a total order on $[0,1]$ in some natural way. Before considering this in more detail, we have the following example.
\begin{example}
\label{exa:Output_22}
This example uses the typical data $T$ that was obtained in Example \ref{exa:Output_5.1}. The binary search algorithm of Subsection \ref{subsec:2.2} was again applied to $T$ but this time $\dis_{\infty}$ was used in Definition~\ref{def:fe} instead of $\dis_{1}$. After four cycles, the weighted relations in Table \ref{tab:13} were returned with $\efe(R,U,T)=5.30610$ using $\dis_{\infty}$.
\begin{table}[ht]
\small
\centering
\caption{Approximate solution for Example \ref{exa:Output_22} using $\dis_{\infty}$ in Definition~\ref{def:fe}.}\smallskip
\begin{tabular}{c|cccc}
    $U$&      0&147.224&294.449&441.673\\ \hline
      0&      1&0.53125&0.28125&0.03125\\
147.224&0.53125&      1&0.65625&0.34375\\
294.449&0.28125&0.65625&      1&0.59375\\
441.673&0.03125&0.34375&0.59375&1
\end{tabular}\\
\vspace{3mm}
\begin{tabular}{c|ccccc}
   $R$& node 1& node 2& node 3& node 4& node 5\\ \hline
node 1&      1&0.96875&0.84375&0.90625&0.90625\\
node 2&0.96875&      1&0.84375&0.90625&0.96875\\
node 3&0.84375&0.84375&      1&0.84375&0.84375\\
node 4&0.90625&0.90625&0.84375&      1&0.90625\\
node 5&0.90625&0.96875&0.84375&0.90625&1
\end{tabular}
\label{tab:13}
\end{table}
We see that $U$ in Table \ref{tab:13} yield the same graph illustration as that given by $U$ in Table \ref{tab:5} from Example \ref{exa:Output_5.1}. The same cannot be said when comparing $R$ in Table~\ref{tab:13} with $R$ in Table \ref{tab:5}, although there are some similarities. However, it also turns out that $U$ and $R$ in Table \ref{tab:5} are a better approximate solution to (\ref{equ:minefe}) in the present example, i.e. when using $\dis_{\infty}$ instead of $\dis_{1}$, than that given by $U$ and $R$ in Table \ref{tab:13}. Indeed the $\efe$ drops from $5.30610$ to $5.27370$ bpe.
\end{example}
The result in Example \ref{exa:Output_22} is perhaps not surprising once we appreciate certain similarities between $\dis_{\infty}$ and $\dis_{1}$. To appreciate these similarities and further results, the following assumption will be useful.
\begin{assumption}
\label{assumpt:metric_projection}
Under this assumption, for a metric $\dis:[0,1]^{n}\to\mathbb{R}_{+}$, there exists a metric $\dis':[0,1]\to\mathbb{R}_{+}$ such that for all $1\leq i\leq n$, $(c_{1},\cdots,c_{i-1},c_{i+1},\cdots,c_{n})\in[0,1]^{n-1}$ and $a,b\in[0,1]$ we have
\begin{gather*}
\dis((c_{1},\cdots,c_{i-1},a,c_{i+1},\cdots,c_{n}),(c_{1},\cdots,c_{i-1},b,c_{i+1},\cdots,c_{n}))=\dis'(a,b).
\end{gather*}
Furthermore, there is $a_{\min}\in[0,1]$ (e.g.\ classically $a_{\min}=0$) such that $\leq_{\dis}$, given by
\begin{gather}
\label{equ:TO}
a\leq_{\dis}b\Leftrightarrow\dis'(a_{\min},a)\leq\dis'(a_{\min},b),
\end{gather}
is a total order on $[0,1]$ and (up to reverse ordering) no other choice of $a_{\min}\in[0,1]$ in (\ref{equ:TO}) gives a different total order. Moreover, $\leq_{\dis}$ determines a maximum element $a_{\max}\in[0,1]$ (e.g.\ classically $a_{\max}=1$) which, if using $\dis$ in Definition~\ref{def:fe} and $\leq_{\dis}$ in the interpretation of weighted relations, would be used in the definition of a reflexive weighted relation in Definition~\ref{def:WR}.
\end{assumption}
\begin{remark}
\label{rem:metric_projection}
\begin{enumerate}
Assumption \ref{assumpt:metric_projection} gives rise to the following remarks.
\item[1.]
Under Assumption \ref{assumpt:metric_projection} it can be argued that $\dis$ determines a single well defined metric on $[0,1]$ and that this metric is~$\dis'$.
\item[2.]
Furthermore, $\dis'$ (and hence $\dis$) determines intervals in $[0,1]$, i.e.
\begin{gather*}
[a,b]_{\dis}:=\{c\in[0,1]:a\leq_{\dis}c\leq_{\dis}b\}\quad\mbox{for }a,b\in[0,1],
\end{gather*}
and the length of such intervals is given by $\dis'(a,b)$.
\item[3.]
With the above two remarks in place, we note that, in each coordinate, $\dis$ defines $\dis$-uniform random variables on $[0,1]$, i.e.\ if $[a,b]_{\dis}$ and $[c,d]_{\dis}$ are of the same length then the probability of a $\dis$-uniform random variable taking a value in $[a,b]_{\dis}$ is the same as it taking a value in $[c,d]_{\dis}$.
\item[4.]
To appreciate some of the similarities between $\dis_{\infty}$ and $\dis_{1}$, note that all of the metrics given in Definition \ref{def:d1} satisfy Assumption \ref{assumpt:metric_projection} and that $\dis'_{1}(a,b)=|a-b|=\dis'_{\infty}(a,b)$ for all $a,b\in[0,1]$. There are also some important differences between $\dis_{1}$ and $\dis_{\infty}$; we will look at some of these shortly.
\end{enumerate}
\end{remark}
We now return to the issue of metric independence.
\begin{lemma}
\label{lem:Independence}
Suppose Definition~\ref{def:fe} uses a metric $\dis$ that satisfies Assumption \ref{assumpt:metric_projection}, and let $f:[0,1]\to[0,1]$ be a bijection. Then $\dis_{f}:[0,1]^{n}\to\mathbb{R}_{+}$,
\begin{gather*}
\dis_{f}((a_{1},\cdots,a_{n}),(b_{1},\cdots,b_{n})):=\dis((f(a_{1}),\cdots,f(a_{n})),(f(b_{1}),\cdots,f(b_{n}))),
\end{gather*}
is also a metric on $[0,1]^{n}$ and the theory in the present paper is independent of a change of metric from $\dis$ to $\dis_{f}$ in Definition~\ref{def:fe} provided that, in Definition~\ref{def:WR}, $f^{-1}(a_{\max})$ is used in the definition of a reflexive weighted relation, $\leq_{\dis_{f}}$ is used in place of $\leq_{\dis}$ in the interpretation of values given by weighted relations and, when obtaining $\efe$-histograms (see Subsection \ref{subsec:2.3}), $\dis_{f}$-uniform random variables are used instead of $\dis$-uniform random variables.
\end{lemma}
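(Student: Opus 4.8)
The plan is to exhibit a single coordinatewise transformation that simultaneously intertwines the metric, the weighted relations, the order structure and the notion of uniform randomness, so that every quantity appearing in Definition~\ref{def:fe} and in~(\ref{equ:minefe}) is carried faithfully from the $\dis$-theory to the $\dis_{f}$-theory. Write $F:[0,1]^{n}\to[0,1]^{n}$ for the coordinatewise map $F(a_{1},\cdots,a_{n})=(f(a_{1}),\cdots,f(a_{n}))$; since $f$ is a bijection so is $F$. That $\dis_{f}$ is a metric is then routine: it is the pullback $\dis_{f}(x,y)=\dis(F(x),F(y))$ of $\dis$ along the bijection $F$, so non-negativity, symmetry and the triangle inequality are inherited from $\dis$, while $\dis_{f}(x,y)=0$ forces $F(x)=F(y)$ and hence $x=y$. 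Equivalently, $F$ is an isometry from $([0,1]^{n},\dis_{f})$ onto $([0,1]^{n},\dis)$, the identity I will use repeatedly.

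First I would record that $\dis_{f}$ again satisfies Assumption~\ref{assumpt:metric_projection}, with one-dimensional metric $\dis'_{f}(a,b)=\dis'(f(a),f(b))$, with distinguished point $f^{-1}(a_{\min})$, and with induced total order the pullback $a\leq_{\dis_{f}}b\Leftrightarrow f(a)\leq_{\dis}f(b)$; in particular its maximum element is $f^{-1}(a_{\max})$, which explains the stipulated choice in Definition~\ref{def:WR}. The heart of the argument is that the construction of Definition~\ref{def:U} commutes with applying $f^{-1}$ coordinatewise: writing $f^{-1}\circ U$ for the weighted relation $(v,v')\mapsto f^{-1}(U(v,v'))$, one checks directly from $R\{U,S_{j}\}(a,b)=U(f_{j}(a),f_{j}(b))$ that $R\{f^{-1}\circ U,S_{j}\}=f^{-1}\circ R\{U,S_{j}\}=F^{-1}(R\{U,S_{j}\})$. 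Hence for any $R\in\Psi_{S}$ and $U\in\Psi_{V}$ (with reflexive value $a_{\max}$ as in Assumption~\ref{assumpt:metric_projection}), applying the isometry property gives
\begin{gather*}
\dis_{f}(f^{-1}\circ R,\,R\{f^{-1}\circ U,S_{j}\})=\dis(R,\,R\{U,S_{j}\})\quad\mbox{for every }S_{j}\in\Omega_{S,V}.
\end{gather*}

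Since $\fe$ in Definition~\ref{def:fe} is the base-two logarithm of a cardinality determined purely by the inequalities $\dis(R,R\{U,S_{j}\})\leq\dis(R,R\{U,S_{i}\})$, and these are preserved verbatim by the displayed identity, I obtain $\fe_{\dis_{f}}(f^{-1}\circ R,f^{-1}\circ U,S_{i})=\fe_{\dis}(R,U,S_{i})$ for every $S_{i}$, and therefore the same equality for $\efe(\cdot,\cdot,P)$ and for the approximation $\efe(\cdot,\cdot,T)$. Because $R\mapsto f^{-1}\circ R$ and $U\mapsto f^{-1}\circ U$ are bijections from $\Psi_{S}$ and $\Psi_{V}$ onto their $\dis_{f}$-reflexive symmetric counterparts (they preserve symmetry and send the reflexive value $a_{\max}$ to $f^{-1}(a_{\max})$), the two minimisation problems~(\ref{equ:minefe}) have equal minima and minimisers in exact correspondence under $f^{-1}$; interpreted through $\leq_{\dis_{f}}$ rather than $\leq_{\dis}$, corresponding solutions encode the very same relationships. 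It remains to match the $\efe$-histograms of Subsection~\ref{subsec:2.3}: by Remark~\ref{rem:metric_projection} the length relation $\dis'_{f}(a,b)=\dis'(f(a),f(b))$ shows that $f$ carries $\dis_{f}$-uniform variables to $\dis$-uniform variables, so generating entries $\dis_{f}$-uniformly and evaluating $\efe_{\dis_{f}}$ produces exactly the distribution obtained by generating $\dis$-uniformly and evaluating $\efe_{\dis}$. I expect the only real care to lie not in any single estimate but in the bookkeeping of the three conventions---reflexive value, order, and uniform measure---so that all three transform consistently under $f$; once the isometry $F$ and the commutation with Definition~\ref{def:U} are in hand, each verification is a direct substitution.
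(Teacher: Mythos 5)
Your proposal is correct and follows exactly the same route as the paper, whose entire proof is the one-line observation that $\dis_{f}$ is merely $\dis$ under relabeling each $a\in[0,1]$ with $f^{-1}(a)$. What you have written is a careful unpacking of that relabeling argument---the pullback isometry $F$, the commutation $R\{f^{-1}\circ U,S_{j}\}=f^{-1}\circ R\{U,S_{j}\}$, and the transport of the reflexive value, the order $\leq_{\dis}$, and the uniform measure---all of which is consistent with the paper's intent and, if anything, more complete than what the paper records.
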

\begin{proof}
Lemma \ref{lem:Independence} follows immediately from the fact that $\dis_{f}$ is merely $\dis$ under relabeling each $a\in[0,1]$ with $f^{-1}(a)$.
\end{proof}
It is expected that a more general result than Lemma \ref{lem:Independence} is possible such that $\dis'$ in Assumption \ref{assumpt:metric_projection} may have dependence on $(c_{1},\cdots,c_{i-1},c_{i+1},\cdots,c_{n})\in[0,1]^{n-1}$ for $1\leq i\leq n$. In this case, the interpretation of each value in a weighted relation table would be dependent on the other values in that table; the interpretation itself is determined by the metric $\dis$ being used.\\
To appreciate one of the differences between $\dis_{1}$ and $\dis_{\infty}$ we require the following definition.
\begin{definition}
\label{def:IFCD}
Let $\dis:[0,1]^{n}\to\mathbb{R}_{+}$ be a metric conforming to Assumption \ref{assumpt:metric_projection}. Moreover, let $a,b,c\in[0,1]^{n}$ be such that $\dis'(a_{i},c_{i})\geq\dis'(b_{i},c_{i})$, for $i=1,\ldots,n$, and for one or more $i$ we have $\dis'(a_{i},c_{i})>\dis'(b_{i},c_{i})$. If for all such $a,b,c\in[0,1]^{n}$ we have $\dis(a,c)>\dis(b,c)$ then $\dis$ is called an {\em increasing function of coordinatewise distance}.
\end{definition}
The metric $\dis_{1}$ is an increasing function of coordinatewise distance but, for $n>1$ in Definition \ref{def:IFCD}, $\dis_{\infty}$ is not; indeed $\dis_{\infty}(a,c)=1=\dis_{\infty}(b,c)$ for $n=2$, $a=(1,0.5)$, $b=(1,0)$ and $c=(0,0)$.\\
It is hoped that, upon further investigation, a class of metrics will emerge as being the most optimal (in some natural way) in the context of the theory of the present paper. Hence, independence arguments would then only need to apply to this class of metrics. It may be that being an increasing function of coordinatewise distance is a necessary condition for a metric to be optimal, but this is for future work. Regarding the theory in the present paper, it is certainly the case that the meaning of the values in weighted relation tables is given by the characteristics of the metric being used in Definition \ref{def:fe}.\\
We conclude this section with a reminder of the variety of different metrics that there are on $\mathbb{R}^{n}$. Lemma \ref{lem:em} shows that, even when restricting attention to metrics that are equivalent to $\dis_{2}$, the variety is great.
\begin{lemma}
\label{lem:em}
Let $\dis_{2}$ be the Euclidean metric on $\mathbb{R}^{n}$, $n\in\mathbb{N}$. For all $a,b\in\mathbb{R}^{n}$, define $\dis^{f}(a,b):=\dis_{2}(f(a),f(b))$, where $f:\mathbb{R}^{n}\to\mathbb{R}^{n}$ is such that $f:(\mathbb{R}^{n},\dis_{2})\to(\mathbb{R}^{n},\dis_{2})$ is a homeomorphism. Then $(\mathbb{R}^{n},\dis^{f})$ is a metric space and $\dis^{f}$ is equivalent to $\dis_{2}$; i.e.\ the open subsets of $(\mathbb{R}^{n},\dis^{f})$ are the same as those of $(\mathbb{R}^{n},\dis_{2})$.
\end{lemma}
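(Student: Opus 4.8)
The plan is to prove the statement in two independent parts: first that $\dis^f$ satisfies the metric axioms, and second that $\dis^f$ induces the same topology as $\dis_2$. For the first part I would simply pull each axiom back through $f$. Nonnegativity, symmetry, and the triangle inequality for $\dis^f$ at points $a,b,c$ are inherited verbatim from the corresponding properties of $\dis_2$ at the points $f(a),f(b),f(c)$, and for these no hypothesis on $f$ is needed beyond its being a function. The only axiom that uses anything about $f$ is the identity of indiscernibles: from $\dis^f(a,b)=\dis_2(f(a),f(b))=0$ we get $f(a)=f(b)$, and since $f$ is a bijection (being a homeomorphism) injectivity forces $a=b$. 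Thus this part requires only that $f$ be a bijection; the full homeomorphism hypothesis is reserved for the equivalence statement.

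For the second part the key observation is that $f$ simultaneously carries two metric structures. By the very definition $\dis^f(a,b)=\dis_2(f(a),f(b))$, the map $f\colon(\mathbb{R}^n,\dis^f)\to(\mathbb{R}^n,\dis_2)$ is an isometry and hence a homeomorphism, while by hypothesis $f\colon(\mathbb{R}^n,\dis_2)\to(\mathbb{R}^n,\dis_2)$ is also a homeomorphism. I would then show that the set-theoretic identity, viewed as $\mathrm{id}\colon(\mathbb{R}^n,\dis_2)\to(\mathbb{R}^n,\dis^f)$, is a homeomorphism, which yields equality of the two topologies and therefore of their open sets. Writing $\mathrm{id}=f^{-1}\circ f$ exhibits it as the composite $(\mathbb{R}^n,\dis_2)\xrightarrow{f}(\mathbb{R}^n,\dis_2)\xrightarrow{f^{-1}}(\mathbb{R}^n,\dis^f)$, in which the first arrow is the given homeomorphism and the second is the inverse of the isometry above; a composite of homeomorphisms is a homeomorphism. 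Equivalently, one can argue directly with $\epsilon$-$\delta$: a $\dis^f$-ball about $p$ is exactly the $f$-preimage of a $\dis_2$-ball about $f(p)$, so continuity of the identity in each direction reduces to continuity of $f$ and of $f^{-1}$ at $f(p)$.

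I expect the mathematical content to be light, so the main thing to get right is bookkeeping rather than a genuine obstacle: one must keep straight which copy of $\mathbb{R}^n$ carries $\dis_2$ and which carries $\dis^f$, since $f$ plays the roles of a homeomorphism of $(\mathbb{R}^n,\dis_2)$ with itself and of an isometry out of $(\mathbb{R}^n,\dis^f)$ at the same time. The one conceptual point worth flagging is that only \emph{topological equivalence} (identical open sets) is claimed and provable here: a general homeomorphism $f$ need not be Lipschitz in either direction, so $\dis^f$ need not be \emph{strongly equivalent} to $\dis_2$, i.e.\ there need be no constants $c_1,c_2$ with $c_1\dis_2\leq\dis^f\leq c_2\dis_2$. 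Being careful not to overreach toward strong equivalence is really the only trap.
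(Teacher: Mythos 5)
Your proposal is correct and takes essentially the same approach as the paper: the paper likewise disposes of the metric axioms via the bijectivity of $f$ and then proves topological equivalence by an explicit $\varepsilon$-ball computation showing that $f$ carries $\dis^{f}$-balls onto $\dis_{2}$-balls (i.e.\ that $f$ is an isometry out of $(\mathbb{R}^{n},\dis^{f})$), before pulling back through the hypothesis that $f$ is a self-homeomorphism of $(\mathbb{R}^{n},\dis_{2})$. Your factorization of the identity as $f^{-1}\circ f$ is just a cleaner packaging of those same two facts.
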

\begin{proof}
Since $f:(\mathbb{R}^{n},\dis_{2})\to(\mathbb{R}^{n},\dis_{2})$ is a homeomorphism, $f:\mathbb{R}^{n}\to\mathbb{R}^{n}$ is a bijection. From this it easily follows that $\dis^{f}$ is a metric. To show equivalence we need to show that $A\subseteq(\mathbb{R}^{n},\dis^{f})$ is open if and only if $A\subseteq(\mathbb{R}^{n},\dis_{2})$ is open. Hence, to show only if, let $A\subseteq(\mathbb{R}^{n},\dis^{f})$ be open. In this direction it is enough to show that $f(A)$, as a subset of $(\mathbb{R}^{n},\dis_{2})$, is open since then $A=f^{-1}(f(A))\subseteq(\mathbb{R}^{n},\dis_{2})$ is open by $f:(\mathbb{R}^{n},\dis_{2})\to(\mathbb{R}^{n},\dis_{2})$ being a homeomorphism. Let $a'\in f(A)$. Then $a'=f(a)$ for some $a\in A$. Since $A\subseteq(\mathbb{R}^{n},\dis^{f})$ is open, there exists $\varepsilon>0$ such that for all $b\in\mathbb{R}^{n}$ with $\dis^{f}(a,b)<\varepsilon$ we have $b\in A$, and thus $f(b)\in f(A)$. Hence, for all $b'\in\mathbb{R}^{n}$ with $\dis^{f}(a,f^{-1}(b'))<\varepsilon$ we have $f^{-1}(b')\in A$, and thus $f(f^{-1}(b'))=b'\in f(A)$. Noting that $\dis^{f}(a,f^{-1}(b'))=\dis_{2}(f(a),f(f^{-1}(b')))=\dis_{2}(a',b')$, it follows from the last statement that for all $b'\in\mathbb{R}^{n}$ with $\dis_{2}(a',b')<\varepsilon$ we have $b'\in f(A)$. Hence, $f(A)\subseteq(\mathbb{R}^{n},\dis_{2})$ is open. The proof in the other direction is similar.
\end{proof}
\begin{example}
\label{exa:em}
Let $f:\mathbb{R}^{2}\to\mathbb{R}^{2}$ conform to the conditions of Lemma \ref{lem:em}. If $f$ maps $\ell$ in Figure \ref{fig:11} to the unit circle in $(\mathbb{R}^{2},\dis_{2})$ and $f(0)=0$ then $\ell\subseteq\mathbb{R}^{2}$ is the unit circle in $(\mathbb{R}^{2},\dis^{f})$.
\begin{figure}[!t]
\centering
\includegraphics[width=0.30\textwidth]{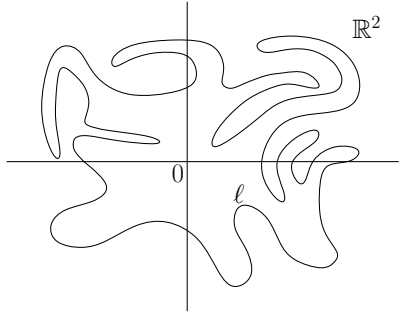}
\captionsetup{justification=centering,margin=2cm}
\caption{The path $\ell$ in Example \ref{exa:em}.}
\label{fig:11}
\end{figure}
\end{example}
Section \ref{sec:4} provides some generalisations of Definition \ref{def:fe}, a comparison with both Integrated Information Theory and Shannon entropy, followed by the conclusion.
\section{Generalisation, a comparison with IIT theory, and conclusion}
\label{sec:4}
In Subsection \ref{subsec:4.1} we extend Definition \ref{def:fe} to involve many more weighted relations.
\subsection{Multi-relational float entropy and time dependents}
\label{subsec:4.1}
We start with a definition. 
\begin{definition}[Multi-relational float entropy]
\label{def:mrfe}
Let $S$ be as in Definition \ref{def:S}, let $U\in\Psi_{V}$, and let $R\in\Psi_{S}$. Furthermore, let $U_{1},U_{2},\ldots$ and $R_{1},R_{2},\ldots$ be weighted relations analogous to $U$ and $R$ but for the system in different bases; see Subsection~\ref{subsubsec:BBS} on base branching structure. The {\em multi-relational float entropy} of a data element $S_{i}\in\Omega_{S,V}$, relative to $U,U_{1},U_{2},\ldots$ and $R,R_{1},R_{2},\ldots$, is defined as
\begin{align*}
&\fe(R,U,R_{1},U_{1},R_{2},U_{2},\ldots,S_{i})\\
&\quad:=\log_{2}(\#\{S_{j}\in\Omega_{S,V}\colon\mbox{C}_{0}(R,U,R_{1},U_{1},R_{2},U_{2},\ldots,S_{i},S_{j})\wedge\mbox{C}_{1}(R,U,R_{1},U_{1},R_{2},U_{2},\ldots,S_{i},S_{j})\wedge\cdots\}),
\end{align*}
where the first condition $\mbox{C}_{0}(R,U,R_{1},U_{1},R_{2},U_{2},\ldots,S_{i},S_{j})$ is $\dis(R,R\{U,S_{j}\})\leq \dis(R,R\{U,S_{i}\})$, as in Definition \ref{def:fe}.
\end{definition}
In Definition \ref{def:mrfe}, all of the conditions $\mbox{C}_{0}, \mbox{C}_{1}, \cdots$ need to be satisfied for a data element $S_{j}$ to contribute toward the multi-relational float entropy of a data element $S_{i}$. The additional conditions should be those that increase the length of the left tail of the $\efe$-distribution.\\
For example, for some given system (and under the requirement of minimising expected multi-relational float entropy), $R$ might be such that it define two dimensional geometry on the nodes of the system. Furthermore, for a particular branch of the base branching structure, the states of the composite nodes will be images under the geometry (given by $R$) on the nodes that have been combined. Hence, for $\mbox{C}_{1}$ analogous to $\mbox{C}_{0}$ but using the new base, the system defines a weighted relation $U_{1}$ on the repertoire of the composite nodes and hence on a set of images. This may have relevance to some aspects of the Gestalt theory of visual perception; see \cite{Wagemans}.\\
Suppose that the system is part of the brain. As suggested in Subsection \ref{subsec:3.2}, at the level of meaning at which images are defined by the visual cortex and sounds are defined by the auditory cortex, it may be that the two brain regions are self contained; i.e.\ they may be separate systems. However, at a higher level of meaning (and for a particular branch of the base branching structure), one of the nodes of the brain will be the whole visual cortex and another will be the whole auditory cortex. The states of the visual cortex are visual objects and the states of the auditory cortex are sounds. Applying the present theory appropriately should give a weighted relation on the two cortical regions and another giving relationship values between objects and sounds. One caveat, however, is that in this case the two cortical regions as nodes do not share the same node repertoire, and so some care needs to be taken when considering how to apply the definitions of the present paper.\\
There is also evidence of sparse coding in various cortical regions; see \cite{Quiroga} and \cite{Graham}. For example, there are neurons that are active if and only if activity related to a specific object (auditory or visual etc) is present in the respective cortex. Hence, under the minimisation of $\efe$ also on this set of neurons, the system defines relationships between objects.\\
Finally, Definition \ref{def:fe} allows time dependent versions of the results presented in the present paper, and in general. Suppose in Example \ref{exa:Output_5.1} that the digital photographs sampled are in fact frames from videos. Choose an integer $k\in\mathbb{N}$. For each sampled frame, sample in the same way the subsequent $k-1$ frames so that the number of nodes of the system has increase by a factor of $k$ (i.e.\ each node in each typical data element is replaced by $k$ nodes that form a sequence of states of the original node over a short time period). In this case, it is anticipated that if $U$ and $R$ solve (\ref{equ:minefe}) then $R$ will define geometry on the nodes of the system that has a dimension for time.
\subsection{A comparison with Integrated Information Theory and Shannon entropy}
\label{subsec:IIT}
This subsection starts with an initial comparison between the theory of the present paper and Giulio Tononi's Integrated Information Theory (IIT) of consciousness. IIT has gained much attention in recent years (see \cite{Tononi}, \cite{Oizumi}, \cite{Balduzzi}, \cite{Balduzzi2} and \cite{Barrett}), and it maybe that the two theories are quite compatible in some areas. There is a significant difference in emphasis in the formulation of the two theories. In \cite{Oizumi} IIT has been formulated and further developed with the intention that it will satisfy certain self-evident truths about consciousness, which Tononi refers to as axioms. In brief, the axioms are as follows:
\begin{itemize}
\item Existence: Consciousness exists.
\item Composition: Within the same experience, one can see, for example, left and right, red and green, a circle and a square, a red circle on the left, a green square on the right, and so on.
\item Information: Consciousness is informative: each experience differs in its particular way from an immense number of other possible experiences.
\item Integration: Each experience is irreducible to independent components.
\item Exclusion: At any given time there is only one experience having its full content. This axiom also states constraints on consciousness such as resolution.
\end{itemize}
From these axioms IIT postulates a number of properties that physical systems must satisfy in order to generate consciousness. These properties introduce a substantial amount of initial theory involving cause and effect within systems. Since this initial theory is fundamental to the formulation of IIT, it is crucial that the set of axioms is correct and complete. The theory of the present paper has a significant difference in emphasis because it uses the following axiom in its formulation:
\begin{itemize}
\item Relations: Consciousness is awash with underlying relationships which provide the relational content of experience.
\end{itemize}
It is natural that relations should be fundamental to the formulation of a theory of consciousness because, in one form or another, they are ubiquitous among mathematical structures. Hence, in the author's opinion, this axiom should be added to Tononi's list of axioms. However, IIT does have something to say about the quality of conscious experience, and this is discussed below.\\
It is worth noting that the theory in the present paper is more or less compatible with the IIT axioms. For example, regarding the unity of consciousness (integration), according to the theory in the present paper, when a brain state is interpreted in the context of the weighted relations that minimise expected multi-relational float entropy, the brain state acquires meaning in the form of the relational content of the associated experience. Furthermore, regarding resolution (part of the exclusion axiom) we recall that, for all but trivial examples, (\ref{equ:minefe}) will have many solutions and, hence, only defines weighted relations up to a certain resolution that depends on the system. Of course a more rigorous comparison with the axioms is desirable, but this is for future work.
\subsubsection{Mechanisms contributing to consciousness}
\label{subsubsec:Mech}
One of the strengths of IIT is that it attempts to distinguish between brain regions that contribute toward consciousness and those that do not. This is undertaken at several different scales from small mechanisms (i.e. small subsystems) up to whole systems such as the brain. For this purpose, at the scale of mechanisms, the theory introduces a quantity called {\em Integrated Information}, and analogous quantities are introduced for larger scales. It is worth giving the reader some insight into how this quantity is defined for mechanisms. Suppose we have a small number of logic gates that are interconnected in some way, and that the resulting mechanism updates over discrete time. The current state of the mechanism (say at time $t=0$) provides causal information about what the state of the mechanism might have been at time $t=-1$. In fact it implies a probability distribution on the set of all states of the mechanism for $t=-1$. If we were to partition the mechanism in some way by cutting connections and treating cut inputs to gates as extrinsic noise then, in many cases, there would be a reduction in the amount of causal information that the current state of the mechanism provides about what the state of the mechanism might have been at $t=-1$. As in the case of the unpartitioned mechanism, the partitioned mechanism also implies a probability distribution on the set of all states of the mechanism for $t=-1$. The reduction in the causal information is quantified by measuring the distance between these two probability distributions using the Wasserstein metric, also known as the earth-mover's distance.\\
If, out of all the different ways to partition the mechanism, the partition chosen actually loses the minimum amount of causal information then the partition is called the minimum information partition (MIP) for the mechanism in its given state at $t=0$.\\
In IIT, the quantity of {\em causal information} of a mechanism in its given state is defined as the Wasserstein distance between the probability distribution for the state of the unpartitioned mechanism, at $t=-1$, and the probability distribution for the state of the mechanism's MIP at $t=-1$. In IIT there is also an analogous definition for the quantity of {\em effect information} of a mechanism in its given state. Finally, the quantity of {\em integrated information} of a mechanism in its given state is defined as the minimum of its causal information and its effect information for that state. The integration postulate of IIT says that only when the quantity of integrated information is positive can a mechanism contribute to consciousness.\\
We will now consider an example from \cite{Balduzzi} which formed part of the motivation behind the definition of integrated information. We will see that there is an alternative (or complimentary) interpretation of the example which leads in the direction of the theory of the present paper. Consider a digital-camera sensor chip made up of 1 million photodiodes. From the perspective of an external observer, the chip has a large number of different states. From an intrinsic perspective, however, the chip can be considered as 1 million independent photodiodes; cutting the chip down into individual photodiodes would not change the performance of the camera. It is hard to imagine that the chip can be conscious of the images that fall upon it. On the other hand, the visual experiences we enjoy are integrated and we experience whole images. Accordingly, cutting the visual cortex down into individual neurones would completely change the performance of the system.\\
It is then stated in the example that what underlies the unity of experience is a multitude of causal interactions among the relevant parts of the brain. From this we can see why cause and effect is a fundamental part of the definitions used in IIT, and why the theory developed in the direction it did. An alternative (or complimentary) interpretation of the example is that the interactions between neurons make some states of the system more likely than other states; i.e. the system is inherently biased and this defines a probability distribution $P$ on the set of states of the system. The probability distribution is a property of the system itself and allows the system to define expected quantities. This allows the theory of the present paper to be developed with an emphasis on relations, which is desirable since relationships are an inherent part of consciousness.\\
Now let's consider the camera chip in the context of the theory of the present paper. Each photodiode is unbiased since its state is driven by its input signal. The 1 million photodiodes are completely independent. If the chip defines a probability distribution on its states at all (which is debatable) then it is the uniform distribution. In Examples \ref{exa:Output_5.4} and \ref{exa:Output_5.6} of the present paper, we saw that when $P$ is uniform the solutions to (\ref{equ:minefe}) vary greatly and, hence, the system fails to define weighted relations that give a coherent interpretation of the states of the system. Furthermore, the associated $\efe$-histogram is without a left tail. So, for contrast with IIT, the theory of the present paper suggests that, to contribute to consciousness, a mechanism will at least need an inherent probability distribution on its set of states that gives an $\efe$-histogram with a long left tail. The length of the left tail may turn out to be of great importance; when the tail is very long, the solutions to (\ref{equ:minefe}) are very distinct from other weighted relations. The length of the left tail is also important in multi-relational float entropy regarding which branches of the base branching structure should be involved; see Subsection \ref{subsec:4.1}.\\ 
From a practical perspective, we might use cause and effect to estimate the inherent probability distribution of a mechanism. For a deterministic mechanism, we can estimate the probability of a state $S_{i}$ as the number of states that cause $S_{i}$ divided by the total number of states of the mechanism. Of course, Markov Chain theory is appropriate here (particularly in the nondeterministic case) and a rigorous approach should be taken.
\subsubsection{The quality of consciousness}
\label{subsubsec:Quality}
Suppose we have a mechanism that has $n$ states. In IIT (see \cite{Balduzzi}), the {\em Qualia space} of the mechanism is an $n$-dimensional space with a real axis given for each state of the mechanism. Each probability distribution on the set of states of the mechanism defines a point in an $(n-1)$-dimensional subspace of Q-space, noting that, for each probability distribution, probabilities must sum to 1. The point closest to the origin in this subspace is given by the uniform distribution. For a given state of the mechanism at $t=0$, the state defines a probability distribution on the set of states of the mechanism at $t=-1$ and, hence, defines a point in Q-space. Similarly, for the given state, each partitioned version of the mechanism (i.e. where only a subset of the set of connections of the mechanism is present) also defines a point in Q-space. In IIT, some of these points in Q-space are joined by {\em q-arrows}; the connections of the mechanism involved in determining the point at the bottom of a q-arrow are included in the subset of connections involved in determining the point at the top of the q-arrow. This forms a lattice, embedded in Q-space, which has the uniform distribution at its bottom and the distribution given by the complete mechanism at its top. The shape that the lattice encloses is called the {\em quale}, and the q-arrows are a geometric realization of information relationships.\\
Changing the state of the mechanism at $t=0$ will, usually, change the shape of the lattice embedded in Q-space. According to IIT, the shape completely specifies the quality of the experience, and it is suggested in \cite{Balduzzi} that similarity in shape corresponds to similarity in experience. The theory in \cite{Balduzzi} also suggests a way in which relationships, giving the geometry of monocular vision, might be defined in Q-space, although the theory has not been developed in a way that prioritises a capability for defining relationships. The property involved concerns q-arrows and is referred to, in \cite{Balduzzi}, as {\em entanglement}. Suppose a lattice in Q-space has a point $p_{1}$ that is at the bottom of two q-arrows $q_{1,2}$ and $q_{1,3}$ which terminate at points $p_{2}$ and $p_{3}$ respectively. The connections of the mechanism involved in determining the points $p_{2}$ and $p_{3}$, when taken together, determine a point $p_{4}$. Treating $p_{1}$ and $p_{4}$ as vector from the origin, if $p_{4}\not=p_{1}+q_{1,2}+q_{1,3}$ then the q-arrows $q_{1,2}$ and $q_{1,3}$ are said to be tangled. In other words, the information relationship given by $q_{1,4}$ does not reduce to the information relationships given by $q_{1,2}$ and $q_{1,3}$.\\
With respect to vision, it is suggested in \cite{Balduzzi} that, for a mechanism in the form of a grid, connections of the mechanism that are close together will give entangled q-arrows in Q-space near the bottom of the lattice, but connections of the mechanism that are far apart will not. Hence, these entanglements give rise to the concept of local regions and, therefore, geometry.\\
From the perspective of the author of the present paper, entanglement is a desirable addition to the theory of integrated information since it acknowledges the need for the theory to include the capacity to define relationships. For comparison regarding the quality of consciousness, the aim of the present paper is to provide a mathematical theory for how the brain defines the relationships underlying consciousness. If applicable to the visual cortex, the examples in Section \ref{sec:3} show that the perceived relationships between different colours, the perceived relationship between different brightnesses, and the perceived relationship between different points in a person's field of view (giving geometry) are all defined by the brain in a mutually dependent way. If we were to apply the theory to the auditory cortex then the resulting weighted relations might define how we perceive the relationships between the pitches of the chromatic scale. Of course, more work is required. Although an early example, when considering the scope of the theory of the present paper, readers may find the Definitive Player Problem to be of interest; see \cite{Mason}. In short, when IIT leans in the direction of defining relationships synergies start to emerge with the theory of the present paper.
\subsubsection{A comparison of float entropy and Shannon entropy}
\label{subsubsec:Shannon}
Shannon entropy is notably used in the neuroscience of consciousness. The definition of float entropy (see Definitions \ref{def:fe} and \ref{def:mrfe}) has some similarity to that of Boltzmann's entropy. Whilst not to be confused with Shannon entropy, expected float entropy, $\efe$, does have some similarities with Shannon entropy. Indeed, $\efe$ is a measure (in bits per data element) of the expected amount of information needed, to specify the state of the system, beyond what is already known about the system from the weighted relations provided. Shannon entropy is a measure of information content in data. As data becomes more random, Shannon entropy increases because structure in data is actually a form of redundancy. By solving (\ref{equ:minefe}) for a given system we obtain a structure in the form of weighted relations defined by the system. Relative to these weighted relations, if the system was to become more random then the $\efe$ value for the system would increase.
In order to make the similarities between $\efe$ and Shannon entropy clearer, consider the summation
\begin{gather}
\label{equ:Shannon}
\sum_{S_{i}\in\Omega_{S,V}}P(S_{i})\log_{2}\left(\frac{1}{P(S_{i}\mid A_{S_{i}})}\right),
\end{gather}
where $A_{S_{i}}:=\{S_{j}\in\Omega_{S,V}\colon \dis(R,R\{U,S_{j}\})\leq \dis(R,R\{U,S_{i}\})\}$. The summation in (\ref{equ:Shannon}) is similar in form to the definition of Shannon entropy. Furthermore, (\ref{equ:Shannon}) can be written as
\begin{gather}
\label{equ:Shannon2}
\sum_{S_{i}\in\Omega_{S,V}}P(S_{i})\log_{2}\left(\frac{\sum_{S_{j}\in A_{S_{i}}}P(S_{j})}{P(S_{i})}\right),
\end{gather}
and, when the probabilities in the argument of the logarithm are comparable, this will give a value similar to $\efe(R,U,P)$. Finally, we can write (\ref{equ:Shannon2}) as
\vspace{-0.5cm}
\begin{gather}
\label{equ:Shannon3}
H+\sum_{S_{i}\in\Omega_{S,V}}P(S_{i})\log_{2}\left(\sum_{S_{j}\in A_{S_{i}}}P(S_{j}) \right),
\end{gather}
where $H$ is the Shannon entropy of the system and, with consideration of the log function, the second term has a negative value between $-H$ and 0. As per Example \ref{exa:Output_5.6}, even when $P$ is uniform over $\Omega_{S,V}$, the second term of (\ref{equ:Shannon3}) need not be equal to 0. However, for $U$ and $R$ the constant functions which everywhere take the value 1, (\ref{equ:Shannon3}) simplifies to $H$.
\subsection{Conclusion}
\label{subsec:4.2}
The present paper significantly extends the work introduced in \cite{Mason} by further developing theory and testing theory using several informative examples. We have noted the following two facts. Firstly, conscious experience is awash with underlying relationships. Secondly, for various brain regions, such as the visual cortex, the probability distribution over the different possible states of the system is far from being uniform owing to the effect of learning rules that weaken or strengthen synapses. Hebb's principle says that what fires together wires together and the BCM version of Hebbian theory is one of many such learning paradigms; see \cite{BCM} and \cite{KRB}. There is also evidence for the relevance of BCM theory regarding the hippocampus; see~\cite{DB}. Furthermore, the probability distribution over the states of the system is a property of the system itself allowing the system to define expected quantities. The theory in the present paper provides a link between the above facts. Under the requirement of minimising expected (multi-relational) float entropy, the brain defines relationships; the theory represents relationships using weighted relations. It is proposed that when a brain state is interpreted in the context of all these weighted relations, defined by the brain, the brain state acquires meaning in the form of the relational content of the associated experience. The examples in the present paper provide evidence that supports the theory.\\
In Example \ref{exa:Output_5.1}, $T$ was obtained from digital photographs having a four shade gray scale. In this case, $T$ has defined the correct relationships under the requirement that $\efe$ is minimised. Similarly, in Example \ref{exa:Output_6.1}, $T$ was obtained from digital photographs having a nine colour red/green palette. We note that, given the system involved, $R$ and $U$ in this example also appear to be favorable weighted relations, and appropriate as approximate solutions to (\ref{equ:minefe}). However, in this case $R$ and $U$ were guessed and judged appropriate from the $\efe$-histogram; the actual solutions to (\ref{equ:minefe}) could be somewhat different.\\
The results in these examples suggest that the perceived relationships between different colours, the perceived relationships between different brightnesses, and the perceived relationships between different points in a person's field of view (giving geometry) are all defined by the brain in a mutually dependent way. Hence, in this case, there is a connection between the relationships that underly colour perception and our perception of the underlying geometry of the world around us.\\
If we were to apply the theory to the auditory cortex then the resulting weighted relations might define how we perceive the relationships between the pitches of the chromatic scale. Of course, more work is required. Although an early example, when considering the scope of the theory, readers may find the Definitive Player Problem to be of interest; see \cite{Mason}.\\
In Example \ref{exa:Output_5.6}, we applied the theory to a system where the probability distribution $P$ in Definition \ref{def:S} is uniform over $\Omega_{S,V}$. In this case the solutions to (\ref{equ:minefe}) vary greatly (instead of all being similar) and, hence, the system fails to define weighted relations that give a coherent interpretation of the states of the system. We found that the variation in the solutions to (\ref{equ:minefe}) is partly due to a type of symmetry within $T$; this is discussed in Example \ref{exa:Output_5.6}. Also, the associated $\efe$-histogram is without a left tail. This example supports the claim that the theory may satisfies the empirical observation that not all systems appear to be capable of consciousness.\\
In Subsection \ref{subsubsec:BEH}, we investigated the effect of applying base changing operations. Typical data was obtained for seven systems from digital photographs. For each of the systems investigated, Figure \ref{fig:10} shows that, when changing back to the original base from the alternative base, the $\efe$-histogram undergoes an increase in negative skewness and mean minus minimum as well as a right shift in location. Furthermore, for most of the systems, the minimum $\efe$ value observed, when using the original base, is to the left of the minimum $\efe$ value observed when using the alternative base. These results suggest that the bases maximising the length of the left tail of the $\efe$-distribution (here approximated by an $\efe$-histogram) are important for the theory presented in the present paper. However, instead of permitting all base changing operations, restricting the theory to the base branching structure may be necessary; see Subsection \ref{subsubsec:BBS}.\\
It is argued in \cite{Mason} that the theory presented there provides a solution to the binding problem and avoids the homunculus fallacy. Those arguments also apply to the theory presented in the present paper. In particular, consciousness is not the output of some algorithmic process but it may instead, largely, be the states of the system interpreted in the context of the weighted relations that minimise expected multi-relational float entropy; see Definition \ref{def:mrfe}. The weighted relations that Definition \ref{def:mrfe} involves, in addition to $U$ and $R$, are brought in to play by increasing the number of conditions in Definition  \ref{def:fe}. The extra conditions utilise higher bases of the base branching structure. The findings of the present paper suggest that the conditions $\mbox{C}_{0}, \mbox{C}_{1}, \cdots$ should be those that increase the length of the left tail of the $\efe$-distribution.\\
In Subsection \ref{subsec:3.2}, we investigated joining and partitioning systems. Table \ref{tab:12} provides an example where the minimum $\efe$ of the system is greater than 3 whilst, after partitioning, the sum of the minimum $\efe$s is only 2.8.\\
In Subsection \ref{subsec:3.3}, we considered whether the theory presented in the present paper is independent of the choice of metric used in Definition \ref{def:fe} when the metric determines a total order on $[0,1]$ in some natural way. In this case, the meaning of the values in weighted relation tables is determined by the metric being used. Example \ref{exa:Output_22} and Lemma \ref{lem:Independence} provide some evidence of such independence. However, some more work is required.\\
Finally, in Subsection \ref{subsec:IIT} we made some comparisons between the theory of the present paper, Integrated Information Theory, and Shannon entropy. The integration postulate of IIT says that only when the quantity of integrated information is positive can a mechanism contribute to consciousness. For comparison, the theory of the present paper suggests that, to contribute to consciousness, a mechanism will at least need an inherent probability distribution on its set of states that gives an $\efe$-histogram with a long left tail. The length of the left tail may turn out to be of great importance.\\
According to IIT, the shape of a quale in Q-space completely specifies the quality of the experience, and it is suggested in \cite{Balduzzi} that similarity in shape corresponds to similarity in experience. The theory in \cite{Balduzzi} also suggests a way in which relationships might be defined in Q-space by entangled q-arrows. For comparison, the theory of the present paper suggests that, under the requirement of minimising expected (multi-relational) float entropy, the brain defines relationships (represented in the theory by weighted relations) such that when a brain state is interpreted in the context of all these relationships the brain state acquires meaning in the form of the relational content of the associated experience.\\
Finally in Subsection \ref{subsubsec:Shannon} we showed that $\efe$ is a measure (in bits per data element) of the expected amount of information needed, to specify the state of the system, beyond what is already known about the system from the weighted relations provided.\\
It is hoped that future research will someday determine the extent to which the word `quasi' can be removed from the title of the present paper. Whilst rather different in content, readers may also find \cite{Edelman}, \cite{Ascoli}, \cite{Sporns} and \cite{Miyawaki} to be of interest.
\appendix
\section{Software}
\label{App:AppendixA}
\begin{table}[ht]
\small
\centering
\caption{Software used during the research for the present paper.}\smallskip
\begin{tabular}{lll}
Software&Availability&Use in the present paper\\ \hline
GIMP 2.6&Freeware&Used to posterise digital raster images, i.e. reduce the palette size to a small\\
&&number of shades or colours.\\
RasterSampler 1.0 (Java)&From the author&Used to sample pixels and collate data.\\
URFinder 3.7 (Java)&From the author&Used to search for solutions to (\ref{equ:minefe}) and collect observations for $\efe$-histogram.\\
Excel 2007&Microsoft&Used to generate binary entry tables (such as those in Table \ref{tab:4}), store outputs\\
&&and perform statistical analysis.\\
Minitab 17&Minitab Inc&Statistical analysis.
\end{tabular}
\label{tab:14}
\end{table}
URFinder 3.7 can be used to implement the binary search algorithm, specified in Subsection \ref{subsec:2.2}, and for collecting observations from which $\efe$-histograms can be produced. The author ran URFinder 3.7 on a desktop dual-core CPU machine, and is happy to distribute the software. The algorithm and machine were chosen for convenience and their performance (i.e. the maximum size of system that can practically be investigated) is far from what could potentially be achieved. Indeed, for a system with $n=\#S$ and $m=\#V$, Step 2 of the binary search algorithm calculates $2^{(n(n-1)+m(m-1))/2}$ exact $\efe$ values. This is computationally expensive for all but quite small systems, particularly since the algorithm calculates exact $\efe$ values rather than estimates obtained by employing statistical methods.\\
For future investigations we could consider taking advantage of the continuing increase in power and affordability of multi-GPU machines and hybrid CPU-GPU machines. The use of GPUs can result in orders of magnitude improvement in speed over conventional processors. Furthermore, (\ref{equ:minefe}) is an optimisation problem and falls within a common general class of problems studied in optimisation theory for which a number of efficient algorithms are available. These involve, gradient methods, stochastic gradient methods and derivative free optimisation; see 
\cite{Krejic}, \cite{Nocedal} and \cite{Conn}.
\section{Notation}
\label{App:AppendixB}
\begin{table}[ht]
\small
\centering
\caption{Notation (most of the formal definitions can be found in Subsections \ref{subsec:1.1}, \ref{subsec:3.3} and \ref{subsec:4.1}).}\smallskip
\begin{tabular}{rl}
Symbol&Description\\ \hline
$a,b,c,\ldots$&elements of $S$ but also used to denote elements of other sets when the meaning is clear from the\\
&context.\\
$A$&an element of $2^{\Omega_{S,V}}$.\\
$B_{n}$&the Bell number for $\# S=n$.\\
$B(x)=\exp(e^{x}-1)$&the generating function of $B_{n}$.\\
$C_{0}, C_{1}, C_{2},\ldots$&conditions, involving weighted relations, in the definition of multi-relational float entropy.\\
$\dis$&a metric on the set of all weighted relations on $S$ or, in places, a metric on $[0,1]^{n}$.
\end{tabular}
\label{tab:15}
\end{table}
\begin{table}[ht]
\small
\centering
\begin{tabular}{rl}
$\dis_{n}$&for $n\in\mathbb{N}\cup\{\infty\}$, a metric (on the set of all weighted relations on $S$) obtained from the corresponding\\
&$p$-norm, for $p=n$, on a finite dimensional vector space.\\
$\dis^{f}$&a metric on $\mathbb{R}^{n}$; a function $f:\mathbb{R}^{n}\to\mathbb{R}^{n}$ is used in its definition.\\
$\dis_{f}$&a metric on $[0,1]^n$; a function $f: [0,1]\to[0,1]$ is used in its definition.\\
$\leq_{\dis}$&a total order on $[0,1]$ determined by the metric $\dis$.\\
$[\cdot,\cdot]_{\dis}$&an interval determined by the metric $\dis$.\\
$\efe(R,U,P)$&the expected float entropy, relative to $U$ and $R$, of the given system.\\
$\efe(R,U,T)$&the mean approximation of $\efe(R,U,P)$.\\
$\fe(R,U,S_{i})$&the float entropy, relative to $U$ and $R$, of the data element $S_{i}$.\\
$\fe(R,U,R_{1},U_{1},R_{2},U_{2},\ldots,S_{i})$&the multi-relational float entropy, relative to $U,U_{1},U_{2},\ldots$ and $R,R_{1},R_{2},\ldots$, of the data element $S_{i}$.\\
$f_{i}$&the map $f_{i}:S\to V$ corresponding to the data element $S_{i}$.\\
node 1,node 2,node 3,$\ldots$&elements of $S$.\\
$P$&the probability distribution $P:\Omega_{S,V}\to[0,1]$ of the random variable defined by the bias of the given\\
&system. $P$ extends to a probability measure on $2^{\Omega_{S,V}}$.\\
$R$&an element of $\Psi_{S}$.\\
$R\{U,S_{i}\}$&the element of $\Psi_{S}$ given by the canonical definition $R\{U,S_{i}\}:=U(f_{i}(a),f_{i}(b))$ for all $a,b\in S$.\\
$S$&a nonempty finite set; in most places $S$ denotes the set of nodes of a system.\\
$S_{i}$&a data element for $S$, i.e. a system state given by the aggregate of the node states.\\
$T$&the typical data for the given system, i.e. $T$ is a finite set of numbered observations of the given\\
&system.\\
$\tau$&the map $\tau:\{1,\ldots,\# T\}\to\{i:S_{i}\in\Omega_{S,V}\}$ for which $S_{\tau(k)}$ is the value of observation number $k$ in $T$.\\
&$\tau$ need not be injective.\\
$U$&an element of $\Psi_{V}$.\\
$v_{1}, v_{2}, v_{3},\ldots$&elements of $V$.\\
$V$&the node repertoire, i.e. the set of node states for a given system.\\
$\Psi_{S}$&the set of all reflexive, symmetric weighted-relations on $S$.\\
$\Psi_{V}$&the set of all reflexive, symmetric weighted-relations on $V$.\\
$\Omega_{S,V}$&the set of all data elements $S_{i}$, given $S$ and $V$.\\
$2^{\Omega_{S,V}}$&the power set of $\Omega_{S,V}$.
\end{tabular}
\label{tab:15.1}
\end{table}
{\bf\em{Acknowledgment}}\\
The author is grateful to the anonymous referee for carefully reading this paper and providing helpful comments. The author is also grateful to the production editor for ensuring that the finished article is nicely presented.

\nocite{*}     
\bibliographystyle{unsrt}

\end{document}